\newcolumntype{L}{>{\arraybackslash}m{3cm}}
\pgfplotsset{compat=1.10}
\newtheorem{thm}{Theorem}
\newtheorem{lem}[thm]{Lemma}
\newtheorem{prop}[thm]{Proposition}
\newtheorem{defn}[thm]{Definition}
\newtheorem{clm}[thm]{Claim}
\newcommand{\cl}[1]{\mathcal{#1}} % for caligraphic symbols
\newcommand{\NN}{\mathbb{N}}      % for Naturals
\newcommand{\citet}[1]{\cite{#1}}
\renewcommand*{\@fnsymbol}[1]{\ensuremath{\ifcase#1\or *\or 
		\mathsection\or \mathparagraph\or \|\or **\or \dagger\dagger
		\or \ddagger\ddagger \else\@ctrerr\fi}}
\title{Incentive-Compatible Classification\thanks{This project has received funding from the European Research Council (ERC) under the European Union's Horizon 2020 research and innovation programme (grant agreement  n$^o  $ 740435).}}
\author{Yakov Babichenko \quad\qquad and \quad\qquad Oren Dean \qquad\quad  and \quad\qquad Moshe Tennenholtz\\
	\small{
		yakovbab@tx.technion.ac.il \qquad\qquad orendean@campus.technion.ac.il \qquad\quad\quad moshet@ie.technion.ac.il\quad\qquad
	}\\
	Technion --- Israel Institute of Technology\\	Haifa, Israel
}
\begin{document}
\maketitle

\begin{abstract}
We investigate the possibility of an incentive-compatible (IC, a.k.a. strategy-proof) mechanism for the classification of agents in a network according to their reviews of each other. In the $ \alpha $-classification problem we are interested in selecting the top $ \alpha $ fraction of users. We give upper bounds (impossibilities) and lower bounds (mechanisms) on the worst-case coincidence between the classification of an IC mechanism and the ideal $ \alpha $-classification.\\
We prove bounds which depend on $ \alpha $ and on the maximal number of reviews given by a single agent, $ \Delta $. Our results show that it is harder to find a good mechanism when $ \alpha $ is smaller and $ \Delta $ is larger. In particular, if $ \Delta $ is unbounded, then the best mechanism is trivial (that is, it does not take into account the reviews). On the other hand, when $ \Delta $ is sublinear in the number of agents, we give a simple, natural mechanism, with a coincidence ratio of $ \alpha $. 
\end{abstract}

\section{Introduction}
There are many situations in which peer agents have binary, directed interactions with each other, and in which one side can rate his experience from the interaction, or rather rate the other agent. The following are just a few examples:
\begin{enumerate}
	\item E-commerce sites in which buyers might also be sellers (e.g., ebay.com, amazon.com).
	\item Academic paper reviewers for a conference might themselves be authors of papers submitted to the same conference.
	\item Employees in an organisation are sometime asked to fill out a sociometric overview of their fellow friends.
\end{enumerate}
In all of the above examples, a coordinator/manager is classifying the agents in the system according to the reviews they received. In the e-commerce example, the top-rated sellers will appear higher and more frequently in search results; the academic conference will only accept a certain top-rated portion of the papers. Employees with higher sociometric results have better chances at a promotion. A natural problem arises---in order to maximize one's relative rating, it is a dominant strategy in these situations to give a harsh critique to all interactions. In this paper, we model the agents and their interactions as a directed network and ask whether it is possible to offer an incentive-compatible (IC) mechanism to select a subset which represents the top-rated (``worthy'') agents. We measure the quality of a mechanism as the resemblance between the selected set of the mechanism and the set of top-rated agents, in the worst case. We investigate the relation between the quality of the best possible mechanism to two parameters: \begin{enumerate*}[label=(\roman*)]
	\item the maximal number of reviews a single agent can issue (the maximal out-degree in the network, denoted $ \Delta $), and 
	\item the fastidiousness of the system (the relative size of the selected set, denoted $ \alpha $).
\end{enumerate*}   

\subsection{Our contribution}
In this paper we investigate the existence of an $ \alpha $-classification IC mechanism in weighted networks. Weighted networks without any limitation were not considered before as a framework for selection mechanisms (\cite{Kurokawa:2015:IPR:2832249.2832330} considered only $ \Delta $-regular weighted networks; the assumption of $ \Delta $-regularity somewhat simplifies the optimisation criterion since in this case the optimisation for average in-weights is equivalent to the optimisation for the sum of in-weights.). The most significant novelty of our model is the consideration of mechanisms which classify the agents to ``worthy'' and ``unworthy'', in contrast to previous works which only considered $ k $-subset selection ($ k $-selection) mechanisms. Our optimisation criterion is the coincidence between the mechanism's classification and the ideal classification. The difference from $ k $-selection is two-fold. First, we do not know the exact size of the set which we need to select. Second, we try to select as many of the {right} (truly worthy) agents and not the {wrong} agents regardless of how high their in-degree is; this is very different from $ k $-selection, which just looks for a subset with high in-degree (even if it is completely disjoint to the optimal subset).

We prove upper and lower bounds on the quality of the best possible IC classification mechanism. We chart the behaviour of these bounds as a function of $ \Delta $ and $ \alpha $. We show that as $ \Delta $ grows (agents are allowed to review many others), the possibility of an IC classification mechanism narrows down until for large values of $ \Delta $ the best mechanism is one of the two trivial mechanisms: select all the agents as worthy, or select every agent independently with probability 1/2. We show the reverse behaviour with $ \alpha $: as we lower $ \alpha  $ (the system is more picky about its worthy-classified agents), the quality of the best possible IC classification mechanism decays to zero. \\
On the other hand, for fixed $ \alpha $ and for $ \Delta $ which is negligible with respect to the number of agents, we provide a mechanism with a positive quality. The idea behind this mechanism is based on a well-known practice to partition the agents into three subsets: absolutely worthy, borderline, and absolutely unworthy. Unlike the well-known practice, our mechanism suggests to classify an agent into these categories after ignoring his reviews on others. This makes the mechanism IC, but complicates the performance analysis of the mechanism.\\
Previous works (e.g. \cite{AFPT11}, \cite{Kurokawa:2015:IPR:2832249.2832330}) showed the existence of an optimal $ k $-selection mechanisms when $ k $ is large (say $ k=\omega(1) $ or $ k=\omega(\Delta) $ with regard to the number of agents). As explained above, these mechanisms only select a $ k $-subset of agents with high in-degree, while an $ \alpha $-classification mechanism needs to select as many of worthy agents and not unworthy agents. This extra predicament shows in the results as we bound the quality of any IC mechanism away from 1 (i.e., for any $ \alpha<1 $ there is no ideal IC classification mechanism). This also shows the significance of our mechanism which, under reasonable assumptions, selects not only {\emph{good }}{agents but the }{\emph{right }}{agents.}  

The graphs in Figure~\ref{fig: main results} summarize our main findings. Graph (a) shows our bounds on the quality of any IC mechanism as a function of $ \alpha $ when $ \Delta $ is negligible with respect to the number of agents. The grey dashed line is the ideal mechanism. The upper bound (red line) shows that for any $ \alpha<1 $, there is a gap between the best possible mechanism and the ideal mechanism, and this disparity is larger for lower values of $ \alpha $. The blue line denotes the quality of our proposed mechanism, and so the green area is what we know about the possible value of the quality of the best IC mechanism. Graph (b) shows the quality of any IC mechanism as a function of $ \alpha $ when $ \Delta $ is not bounded. Here the upper and lower bounds coincide to a single line, that is, we know what is the best possible quality for any value of $ \alpha $.

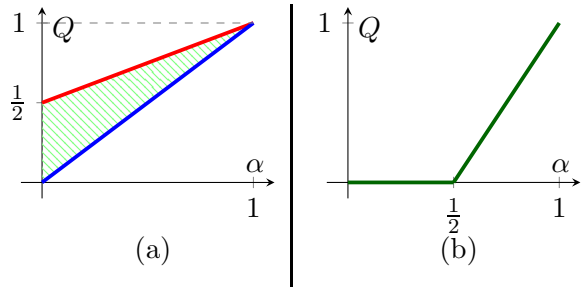
\begin{figure}[h]
\centering
	\begin{tabular}{l|l}

		\begin{tikzpicture}
		\begin{axis}[
		width=0.3\textwidth,
		height=0.25\textwidth,
		axis lines=middle,
		xlabel=$\alpha$,
		ylabel=$Q$,
		enlargelimits,
		ytick={0.5,1},
		yticklabels={$ \tfrac{1}{2} $,1},
		xtick={1},
		xticklabels={1}]
		
		\addplot[name path=F,line width=0.5mm,red,domain={0:1}] {0.5*x+0.5};
		
		\addplot[name path=G,line width=0.5mm,blue,domain={0:1}] {x};

		\addplot[pattern=north west lines, pattern color=green!50]fill between[of=F and G, soft clip={domain=0:1}];
		\addplot[name path=H,dashed,gray,domain={0:1}] {1};		
		
		\end{axis}
		\node at (50pt,-20pt) {(a)};
		\end{tikzpicture} &
		\begin{tikzpicture}
		\begin{axis}[
		width=0.3\textwidth,
		height=0.25\textwidth,		
		axis lines=middle,
		xlabel=$\alpha$,
		ylabel=$Q$,
		enlargelimits,
		ytick={1},
		yticklabels={1},
		xtick={0.5,1},
		xticklabels={$ \tfrac{1}{2} $,1}]
		\addplot[name path=F,line width=0.5mm,black!60!green,domain={0:1/2}] {0};
		
		\addplot[name path=G,line width=0.5mm,black!60!green,domain={1/2:1}] {2*x-1};
		\end{axis}
		\node at (50pt,-20pt) {(b)};		
		\end{tikzpicture} 		
	\end{tabular}
	\caption{Our bounds for the quality of an IC classification mechanism as a function of $ \alpha $. In (a) $ \Delta=o(n) $. The red/blue graphs denote our upper/lower bounds and the green area denotes the known feasible quality range. In (b) $ \Delta=n-1 $. The single green graph is the quality of any IC mechanism.}
	\label{fig: main results}
\end{figure}

\subsection{Related work}\label{sec: related work}
There have been in the past decade quite a few works on IC (sometimes called strategyproof or impartial) selection mechanisms in networks. The most similar model to ours is \cite{Kurokawa:2015:IPR:2832249.2832330}. In that paper the authors considered the $ k $-selection in a weighted $ \Delta $-regular network (that is, every agent has $ \Delta $ out-edges and $ \Delta $ in-edges), where $ k $ and $\Delta $ are constants. They considered IC mechanisms which try to optimize the average in-weights of the selected set. Their main result is a probabilistic mechanism which is optimal when $ \Delta $ is negligible with respect to $ k $.\\
Other papers on this subject only consider unweighted networks, which can be seen as a special case of the weighted networks with weights in $ \{0,1\} $. These works can be divided into two flavours:
\begin{enumerate}[label=(\alph*)]
	\item \emph{Optimisation works} which try to optimize the total in-degree of the selected agent or set of agents. Examples of this group are \citet{AFPT11,Fischer:2014:OIS:2600057.2602836,Bjelde:2017:ISP:3174276.3107922,10.1007/978-3-319-13129-0_10}. Another example is \citet{Babichenko:2018:ID:3178876.3186043} in which the authors try to optimize the progeny of the selected agent. These works differ in several parameters of the problem, such as deterministic/probabilistic mechanisms; exact selection (selected set must be of size $ k $) or inexact selection (selected set is of size at most $ k $); and the sub-family of networks considered (all networks/$ m $-regular networks/acyclic networks/etc.).
	\item \emph{Axiomatic works} which define a set of axioms and investigate the possibility/impossibility of mechanisms that fulfil maximal subsets of these axioms. Examples of this group are \citet{ECTA:ECTA1291,MACKENZIE201515,Aziz:2016:SPS:3015812.3015872}. In \cite{Altman:2008:AFR:1622655.1622669} the authors considered the possibility of complete ranking mechanisms under certain axioms.
\end{enumerate}

\emph{Paper organization.} In Section~\ref{sec: model}, we formally present our model and main results. In Section~\ref{sec: impossibilities}, we prove our impossibility (upper bound) propositions (Propositions~\ref{prp: small delta} and~\ref{prp: large delta}). These proofs rely on an extension of our model to a symmetric, probabilistic mechanism; this extension is formally defined in the beginning of Section~\ref{sec: impossibilities}. In Section~\ref{sec: mechanism}, we present and prove a mechanism with a positive quality when the number of reviews of a single agent is negligible with respect to the number of agents (Proposition~\ref{prp: deterministic mechanism}). In Section~\ref{sec: conclusions}, we conclude and discuss our results.\\

\section{Model and main results}\label{sec: model}
Let $ N=[n] $ be a set of $ n $ agents.\footnote{We assume $ n $ is large as we are generally interested in results which are asymptotic in $ n $. For the same reason we habitually drop floors and ceilings for easier reading.} We represent the interactions between the agents as a directed graph, $ G(N,E) $; thus an edge $ (x,y) $ means that agent $ x $ interacted with agent $ y $ and is allowed to review this agent. Let $ E_{in}(x), E_{out}(x) $ be the sets of in-edges and out-edges of $ x $, respectively. We assume that each agent in the network is in control of the weights of his outgoing edges. These weights, which are real numbers in the interval [-1,1],\footnote{A review of `0' is the neutral review. For example, an agent with all-zero in-edges is rated in the same way as an agent with no in-edges (see the definition of $ s(\cdot) $ in (\ref{eq: score})).\label{ftnt: 1}} represent the reviews of the source agent for the target agents. Thus, the reviews of agent $ x\in N $ for his interactions  are $ \{w_e|{e\in E_{out}(x)}\} $. After all agents submitted reviews on their interactions,\footnote{We can set a zero-weight on all interactions which were not reviewed (see Footnote~\ref{ftnt: 1}), hence we may assume w.l.o.g. that all interactions have been reviewed.} we get a weighted, directed graph; from now on we assume all the edges of $ G $ are weighted. \\
Based on the reviews that agent $ x $ received, $ \{w_e|{e\in E_{in}(x)}\} $, we define his score and his relative ranking in the system. The \emph{score} of agent $ x $ is the average of weights on $ E_{in}(x) $:
\begin{flalign}
 s(x,G)=\begin{cases}
 \dfrac{\sum_{e\in E_{in}(x)}w_{e}}{|E_{in}(x)|}, &E_{in}(x)\neq\emptyset,\\
 0,&E_{in}(x)=\emptyset.
 \end{cases}\label{eq: score}
\end{flalign}
 The \emph{ranking} of agent $ x $ is the number of agents who strictly have a better score than him: $ r(x,G)=|\{y\in N| s(y,G)>s(x,G) \}| $.\footnote{From now on, when $ G $ is clear from the context, we just write $ s(x), r(x) $.} For a given real parameter $ \alpha\in(0,1) $ we consider as \emph{worthy} the subset of agents who are in the top $ \alpha $-ranking:\footnote{We think of $ \alpha $ as the selectiveness of the system:  a lower value for $ \alpha $ means that the tag of being worthy is more prestigious.} 
\[ I_\alpha(G)=\{ x\in N| r(x)<\alpha n\ \}. \]
Notice that the size of $ I_\alpha(G) $ is at least $ \alpha n $, but might be higher in case of ties. For instance, in the empty graph, $ I_\alpha(G(N,\emptyset))=N $ for all $ \alpha $. We denote by $ \overline{I_{\alpha}}(G)=N\backslash I_\alpha(G) $, the subset of \emph{unworthy} agents. \\
Our goal is to offer an IC mechanism which selects a set which is as similar as possible to the subset of worthy agents. Formally, let $ \cl{G}(n) $ be the family of all [-1,1]-weighted, directed networks on $ n $ nodes and let $ P(N) $ be the power set of $ N $. 
\begin{defn}\label{dfn: deterministic mechanism}
   	A \emph{classification mechanism} is a function $ M:\cl{G}(n)\to P(N) $.
\end{defn}
The set $ M(G) $ is the subset of agents which the mechanism $ M $ classifies as worthy in the network $ G $. We denote by $ \overline{M}(G)=N\backslash M(G) $ the subset classified as unworthy by the mechanism. Notice that the definition of a mechanism depends on $ n $ through its dependence on $ \cl{G}(n) $ and $ N $. We abuse notation and regard a single mechanism $ M $ as if it represents a series of mechanisms---one for every natural $ n $. \\
The IC requirement means that an agent's classification is not influenced by his own reviews; that is, changing the weights on the out-edges of an agent does not alter his classification.
\begin{defn}
	A classification mechanism $ M $ is \emph{incentive-compatible} if for every $ n\in\NN, G, G'\in \cl{G}(n), x\in N $, such that: $ E(G)=E(G') $ and $ \forall e\in E\backslash E_{out}(x) $, $ w_e(G)=w_e(G') $, \[ x\in M(G)\iff x\in M(G').\]
\end{defn}
To define a measure for the quality of the mechanism, we first define a measure of coincidence between $ M(G) $ and $ I_\alpha(G) $:
\begin{flalign}
&\cl{C}(M(G), I_\alpha(G))=\dfrac{1}{n}\sum_{x\in N}\begin{cases}
1,&x\in (M(G)\cap I_\alpha(G))\cup (\overline{M}(G)\cap\overline{I_\alpha}(G)),\\
-1,&\text{otherwise}.\end{cases}\label{eq: coincidence measure}
\end{flalign}
In other words, $ \cl{C}(M(G), I_\alpha(G) $ gives one point for every agent that $ M(G) $ classified correctly and takes one point for every agent which was classified erroneously; the result is normalised by the number of agents.\footnote{Other measures might be appropriate for different applications. We discuss two of these alternatives in Section~\ref{sec: conclusions}. The main conclusions from our results stay the same in these variations.} This measure can be somewhat simplified to get,\footnote{The operator $ \oplus $ is the ``exclusive or''. }

\begin{flalign*}
\cl{C}(M(G), I_\alpha(G))&=\dfrac{1}{n}(|((M(G)\cap I_\alpha(G))\cup(\overline{M}(G)\cap \overline{I_\alpha}(G)))|-|M(G)\oplus I_\alpha(G)|  )&\\
&=\dfrac{1}{n}(|N\backslash (M(G)\oplus I_\alpha(G))|-|M(G)\oplus I_\alpha(G)| )
=1-\dfrac{2}{n}|M(G)\oplus I_\alpha(G)|.&
\end{flalign*}

Our main theorems imply that the possibility of an IC mechanism which guarantees a fixed level of coincidence depends on two parameters. The first is $ \alpha $. The second is the maximal out-degree in the network (i.e., the maximal reviews an agent can issue), denoted by $ \Delta $. Intuitively, if $ \Delta $ is high, then an unworthy agent might use his influence on the score of $ \Delta $ worthy agents to improve his ranking and be considered worthy. If $ \alpha $ is relatively low, then these manipulations might influence a large portion (or all) of the worthy-classified agents, which makes it harder to find an IC mechanism with high coincidence measure. Our results will show that this intuition is indeed correct.
Let $ \cl{G}(n,\Delta) $ be the family of all networks on $ n $ nodes with maximal out-degree $ \Delta $. The quality of a mechanism for given $ \alpha, \Delta $, is the limit when $ n $ goes to infinity of the worst case of the coincidence measure:
\begin{flalign}
Q_{\alpha,\Delta}(M)=\lim\limits_{n\to\infty}\;\min_{G\in\cl{G}(n,\Delta)}\cl{C}(M(G),I_\alpha(G)).\label{eq: quality}
\end{flalign}
We will prove upper and lower bounds on $ Q_{\alpha,\Delta} $ for any IC classification mechanism.

\subsection{Main results}
We start by defining two trivial IC mechanisms. `Trivial' means that they classify the nodes without any regard to the edges' weights.\\
 Let $ M_N $ be the complete mechanism, i.e., $ M_N(G)=N $ for all $ G\in\cl{G}(n) $. 
\begin{prop}\label{prp: complete mechanism}
	For any $ \alpha,\Delta $, $ Q_{\alpha,\Delta}(M_N)=2\alpha-1 $.
\end{prop}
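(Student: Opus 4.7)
The mechanism $M_N$ is trivially incentive-compatible since its output does not depend on the weights at all, so the task reduces to computing the worst-case coincidence. My plan is to express $\mathcal{C}(M_N(G),I_\alpha(G))$ in terms of $|I_\alpha(G)|$ alone and then minimize over $G\in\mathcal{G}(n,\Delta)$.

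Since $M_N(G)=N$, the symmetric difference $M_N(G)\oplus I_\alpha(G)$ equals $\overline{I_\alpha}(G)$, whose size is $n-|I_\alpha(G)|$. Plugging into the simplified form $\mathcal{C}=1-\tfrac{2}{n}|M(G)\oplus I_\alpha(G)|$ derived in the previous subsection gives
\[
\mathcal{C}(M_N(G),I_\alpha(G)) \;=\; \tfrac{2}{n}|I_\alpha(G)|-1.
\]
Hence the worst case over $G$ is determined by $\min_{G}|I_\alpha(G)|$.

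To lower-bound $|I_\alpha(G)|$, I will order the agents by score in non-increasing order $s(x_1)\geq s(x_2)\geq\dots\geq s(x_n)$; then $r(x_k)\leq k-1$, so every $x_k$ with $k\leq \alpha n$ satisfies $r(x_k)<\alpha n$ and thus lies in $I_\alpha(G)$, giving $|I_\alpha(G)|\geq \alpha n$. For tightness I will exhibit a witness $G^{*}\in\mathcal{G}(n,\Delta)$: fix a subset $T\subseteq N$ of size $\alpha n$ and place a single directed cycle of weight-$1$ edges on $T$, leaving every other edge absent. Then every $x\in T$ has score $1$, every $x\notin T$ has score $0$, so $I_\alpha(G^{*})=T$ and $|I_\alpha(G^{*})|=\alpha n$; the construction uses out-degree $1\leq\Delta$.

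Combining the two estimates yields $\min_{G\in\mathcal{G}(n,\Delta)}\mathcal{C}(M_N(G),I_\alpha(G)) = 2\alpha-1$ up to the $O(1/n)$ rounding needed to make $\alpha n$ an integer, which vanishes when we take $n\to\infty$ in the definition of $Q_{\alpha,\Delta}$. There is no real obstacle in this proof; the only mildly delicate point is verifying that the bound $|I_\alpha(G)|\geq\alpha n$ is actually attained within the out-degree class $\mathcal{G}(n,\Delta)$, which the explicit cycle construction settles.
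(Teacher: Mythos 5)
Your proof is correct and follows essentially the same route as the paper: both use the simplification $\cl{C}(M_N(G),I_\alpha(G))=1-\tfrac{2}{n}|N\oplus I_\alpha(G)|=\tfrac{2}{n}|I_\alpha(G)|-1$ together with the fact that $|I_\alpha(G)|\geq\alpha n$. The only difference is that you also exhibit the weight-$1$ cycle on a set of size $\alpha n$ to show the bound is attained (so that $Q_{\alpha,\Delta}(M_N)$ equals, rather than merely exceeds, $2\alpha-1$), a tightness step the paper leaves implicit; this is a harmless and slightly more complete addition.
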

\begin{proof}%\phantom{\qedhere}
	Since $ |N\oplus I_\alpha(G)|=|\overline{I_\alpha}(G)|=n-|I_\alpha(G)|\leq n(1-\alpha) $, we get that for any graph,
	\begin{flalign*}
	\cl{C}(M_N(G),I_\alpha(G))=1-\dfrac{2}{n}|N\oplus I_\alpha(G)|&\geq 1-\dfrac{2n(1-\alpha)}{n}=2\alpha-1.
	\end{flalign*}
\end{proof}
Our second trivial mechanism, $ M_{1/2} $, selects every node to be worthy with probability 1/2, independently. We use here the concept of a probabilistic mechanism intuitively; in the beginning of Section~\ref{sec: impossibilities}, we formally extend our model to include this kind of mechanism. Mechanism $ M_{1/2} $ correctly classifies nodes $ x $ with probability 1/2, hence $ Q_{\alpha,\Delta}(M_{1/2})=0 $ for all $ \alpha, \Delta $.\\
Our first result is a strong impossibility, saying that when $ \Delta $ is large, one of the two trivial mechanisms is the best possible.
\begin{thm}\label{thm: large delta}\quad
	\begin{enumerate}[label=(\alph*),leftmargin=10mm]
		\item For $ \alpha\geq \dfrac{1}{2} $ and $ \Delta\geq 2(1-\alpha)n $, $ Q_{\alpha,\Delta}=2\alpha-1 $.
		\item For $ \alpha<\dfrac{1}{2} $ and $ \Delta=(1-o(1))n $, $ Q_{\alpha,\Delta}=0 $.
	\end{enumerate}
\end{thm}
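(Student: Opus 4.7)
The plan is to symmetrize the mechanism, read off a single scalar from the empty graph, and then use a confusion graph exploiting $\Delta\ge 2(1-\alpha)n$ to derive a second bound. Together, the two bounds pin down the worst-case quality.

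First, I would invoke the extension to symmetric probabilistic IC mechanisms, formally set up at the start of Section~\ref{sec: impossibilities}: given any IC mechanism $M$, uniform averaging over permutations of $N$ yields a symmetric probabilistic IC mechanism $\tilde M$ with $Q_{\alpha,\Delta}(\tilde M)\le Q_{\alpha,\Delta}(M)$, so I may assume $M$ is symmetric. On the all-zero graph $G_0$, every score is zero and $I_\alpha(G_0)=N$; by symmetry, $\Pr[x\in M(G_0)]$ equals a single scalar $p$ for every $x$, and the expected coincidence is $\mathbb{E}[\cl{C}(M(G_0),N)]=2p-1$. This produces the first bound $Q_{\alpha,\Delta}(M)\le 2p-1$, which is weak when $p$ is close to $1$.

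Second, I would construct a confusion graph with $|I_\alpha|=\alpha n$. For $B\subset N$ with $|B|=(1-\alpha)n$, let $G_B^*$ be the graph in which every $b\in B$ reviews every other $b'\in B$ with weight $-1$; the required out-degree $(1-\alpha)n-1$ lies within $\Delta\ge 2(1-\alpha)n$. In $G_B^*$, every $b\in B$ has score $-1$ and every $t\in T:=N\setminus B$ has score $0$, so $I_\alpha(G_B^*)=T$ has exact size $\alpha n$. By symmetry of $M$ within $T$ and within $B$, the expected coincidence is the linear function $(1-2\alpha)+2\alpha q_T-2(1-\alpha)q_B$ of two parameters $q_T,q_B\in[0,1]$. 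The technical heart is to apply IC across a family of single-agent perturbations that swap roles between $T$-agents and $B$-agents --- each swap rerouting at most $2(1-\alpha)n$ reviews, hence admissible because $\Delta\ge 2(1-\alpha)n$ --- together with averaging over symmetric choices of $B$, and so deduce that the coincidence on $G_B^*$ is bounded above by the extremal value $(2\alpha-1)(2p-1)$ attained at $q_T=q_B=p$.

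Combining, $Q_{\alpha,\Delta}(M)\le\min\{2p-1,(2\alpha-1)(2p-1)\}$ for some $p\in[0,1]$. For part~(a), $\alpha\ge 1/2$: since $2\alpha-1\ge 0$ and $2p-1\le 1$, the second expression is at most $2\alpha-1$, and $M_N$ achieves this (Proposition~\ref{prp: complete mechanism}), so $Q_{\alpha,\Delta}=2\alpha-1$. For part~(b), $\alpha<1/2$: the two bounds are non-positive on opposite halves of $[0,1]$ and both vanish at $p=1/2$, so the minimum is at most $0$, matched by $M_{1/2}$. The main obstacle will be the IC-averaging step in the second paragraph: IC preserves only a single agent's selection probability under that agent's own edge modifications, so the chain of perturbations linking $G_B^*$ to graphs with known behaviour (like $G_0$) must be designed so that single-agent invariance and overall symmetry together control the joint parameters $q_T$ and $q_B$ in terms of $p$.
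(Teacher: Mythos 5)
There is a genuine gap, and it sits exactly where you flag ``the technical heart.'' Your combination step (lower bounds from $M_N$ and $M_{1/2}$, reduction to symmetric probabilistic IC mechanisms, then an upper bound from a bad-instance construction) mirrors the paper, which derives Theorem~\ref{thm: large delta} from Proposition~\ref{prp: complete mechanism}, the $M_{1/2}$ observation, and the upper bound of Proposition~\ref{prp: large delta}. But the entire content of the theorem is that upper bound, and your proposal asserts rather than proves it: the claim that the coincidence on $G_B^*$ is at most $(2\alpha-1)(2p-1)$, i.e.\ that IC plus symmetry force $q_T\approx q_B\approx p$, is never established. Worse, your construction makes the only available IC tool unusable. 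In $G_B^*$ a $B$-agent is distinguished from a $T$-agent by his \emph{incoming} $-1$ edges, which are owned by the other $(1-\alpha)n-1$ members of $B$; no change to his own out-edges can make him isomorphic to a $T$-agent, so IC (which only protects an agent's probability under modifications of \emph{his own} out-edges) gives no link between $q_B$, $q_T$ and $p$. The ``swap rerouting at most $2(1-\alpha)n$ reviews'' is a multi-agent modification, and IC says nothing about it. This is precisely why the paper's Proposition~\ref{prp: large delta} uses a clique on $A\cup B$ in which the two classes differ \emph{only in their own out-weights} (0 versus 1): then nullifying one $b$'s out-edges produces the next graph $G^{k+1}$ in which $b$ is genuinely isomorphic to the $A$-side, yielding $\mu_b^k=\mu_a^{k+1}$, and a summation over a chain of $X=\Theta(1/\epsilon)$ such graphs drives $\mu_{\max}>1$, a contradiction. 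Note also that even the paper does not get exact equality of probabilities; it only gets an asymptotic statement via this contradiction argument, so any repair of your step must be similarly quantitative.

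Two smaller points. First, your symmetrization inequality is stated backwards: you need $Q_{\alpha,\Delta}(\tilde M)\ge Q_{\alpha,\Delta}(M)$ (the paper's Claim~4) for an upper bound on symmetric mechanisms to transfer to $M$. Second, for part (b) your combination $\min\{2p-1,(2\alpha-1)(2q-1)\}\le 0$ requires the empty-graph probability $p$ and the common probability $q$ on $G_B^*$ to coincide (if $p>1/2$ and $q<1/2$ both bounds are positive), which is an additional unproven link; the paper avoids this entirely by never invoking the empty graph and instead forcing $1-m$ directly from the chained clique instances. As written, the proposal does not prove the upper bounds in either part of the theorem.
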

\begin{proof}
	This is a direct consequence of Proposition~\ref{prp: complete mechanism}, our observation for $ M_{1/2} $ and Proposition~\ref{prp: large delta}.
\end{proof}
Our second result is a non-trivial, yet quite natural, mechanism with a better quality than the complete mechanism, provided $ \Delta=o(n) $. The idea behind this mechanism is to recognize three subsets of agents: absolutely worthy, absolutely unworthy, and borderline. The first two subsets contain those agents who will not be able to change their classification, no matter what their reviews will be. The fact that $ \Delta $ is negligible guarantees that the absolutely worthy and absolutely unworthy sets will include a large portion of the true worthy and unworthy subsets, respectively. The mechanism then classifies the absolutely worthy agents as worthy and the absolutely unworthy as unworthy. If we allow the mechanism to be probabilistic, we can select each of the borderline agents to be worthy with probability 1/2; this strategy assures that these agents will not hurt the quality (but will not help it either). We also provide a deterministic version of this mechanism which always classifies correctly almost half of the borderline agents. The following theorem summarises our knowledge when $ \Delta=o(n) $.
\begin{thm}\label{thm: small delta}
	For any $ \alpha $ and $ \Delta=o(n) $, $ \alpha\leq Q_{\alpha,\Delta}=\dfrac{1}{2}(1+\alpha) $.
\end{thm}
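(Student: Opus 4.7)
The statement bundles three inequalities: the impossibility $Q_{\alpha,\Delta}\le\tfrac{1}{2}(1+\alpha)$, the deterministic achievability $Q_{\alpha,\Delta}\ge\alpha$, and, for the displayed equality, the matching probabilistic achievability $Q_{\alpha,\Delta}\ge\tfrac{1}{2}(1+\alpha)$. The first follows from Proposition~\ref{prp: small delta} and the second from Proposition~\ref{prp: deterministic mechanism}, both proved elsewhere in the paper. What remains, and is the focus of this plan, is the matching probabilistic lower bound, obtained via the symmetric probabilistic extension of the model introduced at the start of Section~\ref{sec: impossibilities}.

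I would construct a probabilistic IC mechanism $M^{\mathrm{prob}}$ that mirrors the deterministic one. For each $x\in N$, using only data invariant under changes to $x$'s own outgoing reviews, compute $r_{\min}(x)$ and $r_{\max}(x)$ (the minimum and maximum of $r(x)$ as $x$ varies his out-weights) and partition agents into the \emph{absolutely worthy} set $W=\{x:r_{\max}(x)<\alpha n\}$, the \emph{absolutely unworthy} set $U=\{x:r_{\min}(x)\ge\alpha n\}$, and the \emph{borderline} set $B=N\setminus(W\cup U)$. The mechanism classifies $W$ as worthy, $U$ as unworthy, and each $x\in B$ independently as worthy with probability $\tfrac{1}{2}$. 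Incentive compatibility is immediate since the partition depends only on data outside $E_{out}(x)$. Because absolute labels always agree with the corresponding side of $I_\alpha$ while each borderline agent contributes $0$ in expectation to $\mathcal{C}$, one obtains $\mathbb{E}[\mathcal{C}(M^{\mathrm{prob}}(G),I_\alpha(G))]=(|W|+|U|)/n$. The matching lower bound then reduces to the worst-case combinatorial estimate $|W\cup U|\ge\tfrac{1+\alpha}{2}n-o(n)$ for every $G\in\mathcal{G}(n,\Delta)$. The intuition is that $\Delta=o(n)$ forces $r_{\max}(x)-r_{\min}(x)\le\Delta$, so only agents whose rank sits within a $\Delta$-window of the threshold $\alpha n$ can be borderline, and the cardinality constraints $|I_\alpha|\ge\alpha n$ and $|\overline{I_\alpha}|\le(1-\alpha)n$ should further cap how many agents fall into that window.

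\textbf{Main obstacle.} The hard step is the combinatorial control of $|B|$ under adversarial tie structures. The sensitivity inequality alone only limits the number of distinct borderline \emph{ranks}, whereas an adversary can concentrate a large tied score-class exactly at rank $\alpha n$, in principle pushing $|B|$ up to $(1-\alpha)n$; this would only yield $(|W|+|U|)/n\ge\alpha$ rather than $\tfrac{1+\alpha}{2}$. Closing this gap requires exploiting the two-sided constraints on $I_\alpha$ simultaneously. If a direct bound $|B|\le\tfrac{(1-\alpha)n}{2}+o(n)$ cannot be obtained, the natural fallback is to bias the borderline coin, classifying $x\in B$ as worthy with probability $p$ chosen so that the worst-case penalty $b_1(1-p)+b_2 p$ (with $b_1=|B\cap I_\alpha|$, $b_2=|B\cap\overline{I_\alpha}|$) is balanced, and to randomize the resulting mechanism with the complete mechanism $M_N$ of Proposition~\ref{prp: complete mechanism} in a graph-dependent but IC way so that on every graph at least one strategy attains the $\tfrac{1+\alpha}{2}$ benchmark.
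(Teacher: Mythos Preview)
Your reading of the theorem is the source of the trouble. The displayed ``$=$'' is a typo for ``$\le$'': the paper only proves the two-sided estimate $\alpha\le Q_{\alpha,\Delta}\le\tfrac{1}{2}(1+\alpha)$ and explicitly leaves the gap open (this is the green region in Figure~\ref{fig: main results}(a)). The paper's own proof of the theorem simply cites Proposition~\ref{prp: small delta} for the upper bound and Propositions~\ref{prp: probabilistic mechanism} and~\ref{prp: deterministic mechanism} for the lower bound $\alpha$; there is no ``matching probabilistic achievability'' to establish.

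Consequently, the part of your plan you flag as the main obstacle is not an obstacle to overcome but a genuine barrier. Your $W/U/B$ mechanism with a fair coin on $B$ is exactly the mechanism of Proposition~\ref{prp: probabilistic mechanism}, and the paper evaluates it to quality $(|W|+|U|)/n\ge|W|/n\ge\alpha-\Delta/n$, nothing more. Your own observation that an adversary can push $|B|$ up to roughly $(1-\alpha)n$ is correct and is precisely why the paper stops at $\alpha$; the claimed estimate $|W\cup U|\ge\tfrac{1+\alpha}{2}n-o(n)$ is false in general. The proposed fallback (biasing the coin and mixing with $M_N$ ``in a graph-dependent but IC way'') is not a mechanism: any IC mixing weight cannot depend on quantities like $|B\cap I_\alpha|$ that an agent can manipulate, so you would be back to a fixed convex combination of mechanisms each of quality at most $\alpha$ on the hard instances. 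In short, drop the third bullet; the theorem follows directly from Propositions~\ref{prp: small delta} and~\ref{prp: probabilistic mechanism} (or~\ref{prp: deterministic mechanism}).
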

\begin{proof}
	The lower bound comes from an analysis of the above-described probabilistic mechanism; see Proposition~\ref{prp: probabilistic mechanism}. We also prove the existence of a deterministic version of this mechanism in Proposition~\ref{prp: deterministic mechanism}. The upper bound is proved in Proposition~\ref{prp: small delta}.
\end{proof}

\section{Impossibilities}\label{sec: impossibilities}
As promised, we start by extending our model to include probabilistic mechanisms. A probabilistic mechanism assigns each node a probability of being worthy. 
\begin{defn}\label{dfn: probabilistic mechanism}
	A \emph{probabilistic classification mechanism} is a function $ M_p:N\times\cl{G}(n)\to[0,1] $.
\end{defn}
%The probability of $ x $ to be unworthy is $ \overline{M_p}(x,G_w)=1-M_p(x,G_w) $.
To get a concrete selected set from a probabilistic mechanism, we select each node independently with his assigned probability. In other words, the probability of subset $ X\subseteq N $ to be selected under mechanism $ M_p $ in the graph $ G $ is 
\begin{flalign}
\Pr[X|M_p(G)]&=\prod_{x\in X}M_p(x,G)\prod_{x\notin X}(1-M_p(x,G)).\label{eq: probabilistic coincidence}
%&=\prod_{x\in N}\begin{cases}
%M_p(x,G),&x\in X\\
%1-M_p(x,G),&x\notin X.
%\end{cases}
\end{flalign}
The IC requirement translates to the requirement that an agent cannot influence his own selection probability.
\begin{defn}\label{dfn: probabilistic IC}
	A probabilistic classification mechanism $ M_p $ is \emph{incentive-compatible} if for every $ n\in\NN, G, G'\in \cl{G}(n), x\in N $, such that: $ E(G)=E(G') $ and $ \forall e\in E\backslash E_{out}(x) $, $ w_e(G)=w_e(G') $, \qquad$  M_p(x,G)= M_p(x,G') $.
\end{defn}
The coincidence of $ M_p(G) $ with $ I_\alpha(G) $ is naturally extended using expectation over the selected set:
\begin{flalign*}
\cl{C}(M_p(G),I_\alpha(G))&=\sum_{X\in P(N)}\Pr[X|M_p(G)]\cl{C}(X,I_\alpha(G))\\
&=\dfrac{1}{n}\sum_{X\in P(N)}\Pr[X|M_p(G)]\cdot
\sum_{x\in N}\begin{cases}
1,&x\in (X\cap I_\alpha(G))\cup (\overline{X}\cap\overline{I_\alpha}(G))\\
-1,&\text{otherwise}.\end{cases}
\end{flalign*}
Changing the summation order and inserting~(\ref{eq: probabilistic coincidence}) we get,
\begin{flalign*}
&\cl{C}(M_p(G),I_\alpha(G))=\\
&\dfrac{1}{n}\sum_{x\in N}\sum_{X\in P(N\backslash\{x\})}
\prod_{y\in N\backslash\{x\}} \begin{cases}
M_p(y,G),&y\in X\\
1-M_p(y,G),&y\notin X
\end{cases}\cdot\begin{cases}
M_p(x,G)-(1-M_p(x,G)),&x\in I_\alpha(G)\\
(1-M_p(x,G))-M_p(x,G),&x\in\overline{I_\alpha}(G).
\end{cases}
\end{flalign*}
Since
$ \sum\limits_{X\in P(N\backslash\{x\})}\prod\limits_{y\in N\backslash\{x\}} \begin{cases}
M_p(y,G),&y\in X\\
1-M_p(y,G),&y\notin X
\end{cases}=1 $, 
\begin{flalign}
&\cl{C}(M_p(G),I_\alpha(G))=\dfrac{1}{n}\sum_{x\in N}(2M_p(x,G)-1)\cdot\begin{cases}
1,&x\in I_\alpha(G)\\
-1,&x\in\overline{I_\alpha}(G)
\end{cases}\label{eq: 0}\\
&=\dfrac{\overline{I_\alpha}(G)-I_\alpha(G)}{n}+\dfrac{2}{n}\sum_{x\in N}M_p(x,G)\cdot\begin{cases}
1,&x\in I_\alpha(G)\\
-1,&x\in\overline{I_\alpha}(G).
\end{cases}\label{eq: 1}
\end{flalign}

The quality of a probabilistic mechanism can now be defined exactly as in (\ref{eq: quality}):
\begin{flalign*}
Q_{\alpha,\Delta}(M_p)=\lim\limits_{n\to\infty}\;\min_{G\in\cl{G}(n,\Delta)}\cl{C}(M_p(G),I_\alpha(G)).
\end{flalign*}
From Definitions~\ref{dfn: probabilistic mechanism} and~\ref{dfn: probabilistic IC} and the coincidence definition above, it is clear that the IC (deterministic) mechanisms for which we initially defined our problem are a special case of IC probabilistic mechanisms. Hence we may prove our upper bounds (i.e., impossibilities) for probabilistic mechanisms. We furthermore show that it is enough to consider a subfamily of \emph{symmetric}, IC, probabilistic mechanisms.
\begin{defn}\label{dfn: symmetry}
	A probabilistic mechanism $ M_p $ is \emph{symmetric} if for any network $ G\in\cl{G}(n) $ and two isomorphic nodes $ x,y\in N $, $ M_p(x,G)=M_p(y,G) $.
\end{defn}
\begin{clm}
	Let $ M_p $ be any IC, probabilistic, classification mechanism. Then there is an IC, symmetric, probabilistic classification mechanism $ M_p' $ with $ Q_{\alpha,\Delta}(M_p')\geq Q_{\alpha,\Delta}(M_p) $.
\end{clm}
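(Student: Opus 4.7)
The plan is a standard symmetrization argument. For a permutation $\pi$ of $N$ and a graph $G\in\cl{G}(n)$, let $\pi\cdot G$ denote the graph obtained by relabeling each vertex $v$ as $\pi(v)$ and assigning the edge $(\pi(u),\pi(v))$ the weight of $(u,v)$ in $G$. Define
\[
M_p'(x,G)=\frac{1}{n!}\sum_{\pi\in S_n}M_p(\pi(x),\pi\cdot G).
\]
I would then verify three things in turn: incentive-compatibility, symmetry, and the quality inequality.

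\textbf{IC.} Fix $x\in N$. For every $\pi$, the IC of $M_p$ implies that $M_p(\pi(x),\pi\cdot G)$ is unchanged if we modify the weights on the out-edges of $\pi(x)$ in $\pi\cdot G$, which is exactly the set of out-edges of $x$ in $G$. So every summand, hence $M_p'(x,G)$, is independent of the weights on $E_{out}(x)$, and $M_p'$ is IC.

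\textbf{Symmetry.} Suppose $x,y$ are isomorphic in $G$, witnessed by an automorphism $\sigma$ of $G$ with $\sigma(x)=y$ and $\sigma\cdot G=G$. Writing $\pi(y)=(\pi\sigma)(x)$ and $\pi\cdot G=(\pi\sigma)\cdot G$, the change of variables $\pi'=\pi\sigma$ in the sum defining $M_p'(y,G)$ reproduces the sum defining $M_p'(x,G)$.

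\textbf{Quality.} Using the simplified form~(\ref{eq: 0}), and writing $\epsilon_x(G)=+1$ if $x\in I_\alpha(G)$ and $-1$ otherwise, we have
\[
\cl{C}(M_p'(G),I_\alpha(G))=\frac{1}{n}\sum_{x\in N}(2M_p'(x,G)-1)\,\epsilon_x(G)=\frac{1}{n!}\sum_{\pi}\frac{1}{n}\sum_{x\in N}\bigl(2M_p(\pi(x),\pi\cdot G)-1\bigr)\epsilon_x(G).
\]
Since scores and rankings are invariant under graph isomorphism, $\epsilon_x(G)=\epsilon_{\pi(x)}(\pi\cdot G)$; substituting $y=\pi(x)$ in the inner sum yields
\[
\cl{C}(M_p'(G),I_\alpha(G))=\frac{1}{n!}\sum_{\pi}\cl{C}\bigl(M_p(\pi\cdot G),I_\alpha(\pi\cdot G)\bigr).
\]
Because $\pi$ preserves out-degrees, $\pi\cdot G\in\cl{G}(n,\Delta)$ whenever $G\in\cl{G}(n,\Delta)$, so every summand on the right is at least $\min_{G'\in\cl{G}(n,\Delta)}\cl{C}(M_p(G'),I_\alpha(G'))$. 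Taking the minimum over $G$ on the left and then the limit $n\to\infty$ gives $Q_{\alpha,\Delta}(M_p')\geq Q_{\alpha,\Delta}(M_p)$.

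The only step with any substance is the quality inequality, and specifically the isomorphism invariance $\epsilon_x(G)=\epsilon_{\pi(x)}(\pi\cdot G)$; everything else is bookkeeping. There is no real obstacle, just the need to be careful that the degree constraint defining $\cl{G}(n,\Delta)$ is invariant under node relabeling so that the inner minimum bound applies to every $\pi\cdot G$.
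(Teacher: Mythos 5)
Your proposal is correct and follows essentially the same symmetrization argument as the paper: averaging $M_p$ over all relabelings $\pi$, checking IC and symmetry by the same change-of-variables reasoning, and bounding the quality by writing the coincidence of the symmetrized mechanism as an average of coincidences of $M_p$ on isomorphic graphs in $\cl{G}(n,\Delta)$. Your version merely makes explicit two points the paper leaves implicit (the isomorphism invariance $\epsilon_x(G)=\epsilon_{\pi(x)}(\pi\cdot G)$ and the preservation of the out-degree bound under relabeling), which is fine.
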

\begin{proof}
	Let $ S(N) $ be the set of permutations over $ N $. For $ \pi\in S(N) $, let $ G_\pi $ be the graph which is isomorphic to $ G $ under the automorphism defined by $ \pi $. We define the mechanism $ M_p' $:
	\begin{flalign}
	M_p'(x,G)=\dfrac{1}{n!}\sum_{\pi\in S(N)}M_p(\pi(x),G_\pi).\label{eq: 2}
	\end{flalign}
	Mechanism $ M_p' $ is clearly IC, since $ M_p $ is IC and otherwise the calculation above is irrelevant of any of the weights in $ G $. It is also symmetric, since for two isomorphic nodes, $ x,y $, the following two sets of the couples are exactly the same: \[ \{(\pi(x), G_\pi)|\pi\in S(N) \}=\{(\pi(y), G_\pi)|\pi\in S(N) \}. \]
	
	It remains to show that the quality of $ M_p' $ is at least that of $ M_p $. Planting (\ref{eq: 1}) into (\ref{eq: 2})  and changing the summation order we get that for any $ G $,
	\begin{flalign*}
	&\cl{C}(M_p'(G),I_\alpha(G))=\\
	&\dfrac{\overline{I_\alpha}(G)-I_\alpha(G)}{n}+\dfrac{2}{n\cdot n!}\sum_{x\in N}\sum_{\pi\in S(N)}M_p(\pi(x),G_\pi)\cdot\begin{cases}
	1,&x\in I_\alpha(G_\pi)\\
	-1,&x\in\overline{I_\alpha}(G_\pi)
	\end{cases}\\
	=&\dfrac{\overline{I_\alpha}(G)-I_\alpha(G)}{n}
	+\dfrac{1}{n!}\sum_{\pi\in S(N)}\left[\dfrac{2}{n}\sum_{x\in N}M_p(\pi(x),G_\pi)\cdot\begin{cases}
	1,&x\in I_\alpha(G_\pi)\\
	-1,&x\in\overline{I_\alpha}(G_\pi)
	\end{cases}.\right]
	\end{flalign*}
	Let $ Q=Q_{\alpha,\Delta}(M_p) $. For any $ \epsilon>0 $, there is $ n_0 $ such that for all $ n>n_0 $ and for all $ G\in\cl{G}(n,\Delta) $, $ \cl{C}(M_p(G, I_\alpha(G)))\geq Q-\epsilon $, which means that 
	\begin{flalign*}
	\dfrac{2}{n}\sum_{x\in N}M_p(x,G)&\cdot\begin{cases}
	1,&x\in I_\alpha(G)\\
	-1,&x\in\overline{I_\alpha}(G)
	\end{cases}\geq Q-\epsilon-\dfrac{\overline{I_\alpha}(G)-I_\alpha(G)}{n}.
	\end{flalign*}
	Hence, $ \cl{C}(M_p'(G),I_\alpha(G))\geq Q-\epsilon $. Since this is true for any $ \epsilon $, we get that $ Q_{\alpha,\Delta}(M_p')\geq Q $.
\end{proof}

We are now ready to prove our two impossibility propositions which imply the upper bounds of Theorems~\ref{thm: large delta} and~\ref{thm: small delta}.

\begin{prop}\label{prp: small delta}
	For all $ \alpha, \Delta $, $ Q_{\alpha,\Delta}\leq\dfrac{1}{2}(1+\alpha) $.
\end{prop}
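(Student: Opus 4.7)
By the preceding Claim, it suffices to establish the bound for IC symmetric probabilistic mechanisms. Fix such an $M_p'$ and write $Q := Q_{\alpha,\Delta}(M_p')$.

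The first piece of the argument is the empty graph $G_0$. All vertices of $G_0$ are isomorphic, so by symmetry $M_p'(x, G_0) \equiv p_0$ for a single value $p_0$, and since all agents tie we have $I_\alpha(G_0) = N$. Plugging into~(\ref{eq: 0}) gives $\cl{C}(M_p'(G_0), I_\alpha(G_0)) = 2p_0 - 1$, which is at least $Q$ by the definition of the quality; this yields the baseline inequality $p_0 \geq (Q+1)/2$.

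The plan is then to complement this with a second graph $G^*$ whose structure, combined with the baseline, forces $Q \leq (1+\alpha)/2$. A natural candidate has a two-orbit structure---a worthy orbit $W$ of size $\alpha n$ and an unworthy orbit $U$ of size $(1-\alpha)n$, whose scores strictly separate---so that symmetry forces $M_p'$ to take only two values $p_W$ on $W$ and $p_U$ on $U$. Equation~(\ref{eq: 1}) then rewrites the coincidence on $G^*$ as $2\alpha p_W - 2(1-\alpha)p_U + (1-2\alpha) \geq Q$. The central step is to use IC---applied to single-agent perturbations of outgoing edges within $G^*$, and exploiting the orbit symmetry---to link $p_W, p_U$ back to $p_0$. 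Solving the resulting linear system against the baseline $p_0 \geq (Q+1)/2$ produces the claimed bound.

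The main obstacle is the joint design of $G^*$: the two-orbit structure must be compatible with an IC coupling to $G_0$ strong enough to close the linear system at exactly $(1+\alpha)/2$ rather than some tighter or looser value. A cleaner alternative is to avoid explicitly constructing $G^*$ and instead average the coincidence over a symmetric distribution on graphs $\{G_A\}$ indexed by a uniformly random worthy set $A \subset N$ of size $\alpha n$: since IC forces $M_p'(y, G_A)$ to be independent of $y$'s own outgoing edge---which in a suitably designed family encodes one bit of $A$'s membership---the expected coincidence decomposes into per-agent terms that can be bounded directly, giving $\mathbb{E}_A[\cl{C}(M_p'(G_A), I_\alpha(G_A))] \leq (1+\alpha)/2$ and hence $Q \leq (1+\alpha)/2$ since the worst case is at most the average.
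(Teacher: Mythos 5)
Your reduction to symmetric IC probabilistic mechanisms and the general shape of the plan (two linked instances that force a trade-off on a single orbit probability) do match the paper's strategy, but neither of your two routes is actually carried out, and both have concrete obstructions. The empty-graph baseline cannot be connected to any non-empty graph through incentive compatibility: the IC definition requires $E(G)=E(G')$ and permits changing only the weights on one agent's existing out-edges, so $p_0\geq (Q+1)/2$ on $G_0$ imposes no constraint whatsoever on the values $p_W,p_U$ taken on a graph $G^*$ that has edges. Even if you replace $G_0$ by an all-zero-weight copy of $G^*$'s edge set, walking from it to $G^*$ requires changing many agents' out-weights, and each single-agent step pins down only that agent's probability while every other agent's probability may move; turning such a chain into a usable linear relation is precisely the hard part (compare the iterative summation argument the paper needs in Proposition~\ref{prp: large delta}), and it is absent here. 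You state that the ``joint design of $G^*$'' is the main obstacle---but that design \emph{is} the proof.

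The averaging alternative has the same missing core plus a structural problem: for the per-agent decomposition you need each coin agent's worthiness in $G_A$ to be a function of his own out-edge weights only, independent across agents, and such that flipping one agent's bit leaves every other agent's classification unchanged. Worthiness, however, is determined by in-edges and by the global ranking, so an agent can affect his own classification only indirectly, by moving other agents' scores across his own---which simultaneously perturbs other agents' ranks; no family with the required product structure is exhibited, and the per-agent bound is never computed (with a uniformly random $\alpha n$-set, a naive accounting does not land on $(1+\alpha)/2$). For contrast, the paper's proof needs only two graphs: a star-plus-cycle in which the sign of a single pivotal node $v$'s score is controlled by one agent $a_0$'s review. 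IC fixes $a_0$'s probability $\mu$ across the flip, symmetry spreads $\mu$ to the orbit $B$ in the flipped graph, and these agents are unworthy in one graph but worthy in the other, giving $Q_{\alpha,\Delta}\leq\min\{1-\mu(1-\alpha),\,\alpha+\mu(1-\alpha)\}\leq\tfrac{1}{2}(1+\alpha)$. That construction has out-degree one, so it is valid for every $\Delta\geq 1$---a ``for all $\Delta$'' requirement (including $\Delta=1$) that your sketch never addresses.
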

\begin{proof}
	  Let $ v $ be a distinct node. Partition $ N\backslash\{v\} $ into three sets, $ A, B, C $, of sizes $ \dfrac{(1-\alpha)n}{2}, \dfrac{(1-\alpha)n}{2}, \alpha n-1 $, respectively. Let $ G $ be the network in which every node in $ A\cup B $ has an out-edge to $ v $, and $ C $ is a cycle. Set the weights on the edges from $ A $ to $ v $ to be 1, the weights on the edges from $ B $ to $ v $ to be $ -1+\dfrac{1}{(1-\alpha)n} $, and the weights on $ C $ to be 1. For any $ a\in A, b\in B, c\in C $ their scores are: $ s(c)=1 $, $ s(a)=s(b)=0 $, and $ s(v)=\dfrac{|A|+|B|(-1+1/(1-\alpha)n)}{|A|+|B|}=\dfrac{1}{2(1-\alpha)n}>0 $. Hence $ I_\alpha(G)=C\cup\{v\} $. Let $ M $ be an IC, probabilistic and symmetric classification mechanism. By symmetry, we may denote $ M(a,G)=\mu $ for any $ a\in A $. Using (\ref{eq: 0}) we get that,
	  \begin{flalign}
	  \cl{C}(M(G),I_\alpha(G))\leq \dfrac{1}{n}((1-2\mu)|A|+|B|+|C|+1)
	  =1-\mu(1-\alpha).\label{eq: first bound}
	  \end{flalign}
	  Now choose a distinct node $ a_0\in A $ and change the weight on its out-edge to $ v $ to be $ -1+\dfrac{1}{(1-\alpha)n} $; we refer to this network as $ G' $. The score of $ v $ has dropped in $ \dfrac{2-1/(1-\alpha)n}{|A|+|B|}=\dfrac{2-1/(1-\alpha)n}{(1-\alpha)n}>\dfrac{1}{2(1-\alpha)n} $, for $ n $ large enough. Hence $ s(v,G')<0  $. Since the rest of the scores are the same in $ G $ and $ G' $, we get that $ I_\alpha(G')=N\backslash\{v\} $. By IC, $ M(a_0,G')=M(a_0,G)=\mu $, and by symmetry $ M(b,G')=\mu $ for any $ b\in B $. We now get	,
	  \begin{flalign}
	  \cl{C}(M(G'),I_\alpha(G'))\leq \dfrac{1}{n}(|A|+(2\mu-1)|B|+C+1)
	  =1-(1-\mu)(1-\alpha)=\alpha+\mu(1-\alpha).\label{eq: second bound}
	  \end{flalign}
	  From (\ref{eq: first bound})  and (\ref{eq: second bound}), $ Q_{\alpha, \Delta}\leq\min\{1-\mu(1-\alpha), \alpha+\mu(1-\alpha) \} $. Comparing the two terms to find the optimal value for $ \mu $ we find that
	  \begin{flalign*}
	  1-\mu(1-\alpha)=\alpha+\mu(1-\alpha)\iff \mu=\dfrac{1}{2}\\
	  \Longrightarrow Q_{\alpha,\Delta}\leq 1-\dfrac{1-\alpha}{2}=\dfrac{1+\alpha}{2}.\qedhere
	  \end{flalign*}
\end{proof}

\begin{prop}\label{prp: large delta}
	Let $ m=\min\left\{2(1-\alpha),\dfrac{\Delta}{n}\right\}  $. Then 
	$ Q_{\alpha,\Delta}\leq 1-m. $
\end{prop}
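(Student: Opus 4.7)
The plan is to generalize the two-graph manipulation of Proposition~\ref{prp: small delta} so that a single attacker $s^*$ uses up his entire out-degree budget to flip the worthy status of $mn$ other nodes simultaneously. By the preceding Claim, it suffices to bound $Q_{\alpha,\Delta}$ over symmetric IC probabilistic mechanisms.

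\textbf{Construction.} Set $k := \min(\Delta/2, (1-\alpha)n)$, so that $2k = mn$. Pick disjoint sets $B_1, B_2 \subseteq N$ of size $k$ each and a distinguished vertex $s^* \notin B_1 \cup B_2$; write $B := B_1 \cup B_2$. In $G$, the only out-edges of $s^*$ go to every vertex of $B$, carrying weight $+w$ on edges to $B_1$ and $-w$ on edges to $B_2$ for a fixed $w \in (0,1)$; no other vertex in $B$ has an in-edge, so $s(v, G) = \pm w$ for $v \in B_i$. Partition the remaining vertices into $C_W$ and $C_U$, each arranged as a short cycle with constant edge-weights $+1$ and $-1$ respectively, yielding scores $+1$ and $-1$. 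Choosing $|C_W| := \max(0, \alpha n - k - 1)$ and $|C_U| := n - 2k - 1 - |C_W|$, a routine case check (distinguishing $k < \alpha n$ from $k \geq \alpha n$, the latter only possible when $\alpha < 1/2$) shows $|C_U| \geq 0$ and $I_\alpha(G) \cap B = B_1$. Define $G'$ identically except that $s^*$ swaps the weights on his out-edges to $B_1$ and $B_2$; then $I_\alpha(G') \cap B = B_2$ and no other score changes. Since $s^*$ has out-degree $2k \leq \Delta$ and all other nodes have out-degree at most $1$, both graphs belong to $\cl{G}(n, \Delta)$.

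\textbf{Symmetry and IC.} Let $M_p$ be a symmetric IC probabilistic mechanism. Any permutation within $B_1$ (or within $B_2$) is an automorphism of $G$, so by symmetry there are common values $\mu_1 := M_p(v, G)$ for $v \in B_1$ and $\mu_2 := M_p(v, G)$ for $v \in B_2$. Moreover, the transposition $\pi$ that swaps $B_1 \leftrightarrow B_2$ and fixes every other vertex is an isomorphism from $G$ to $G'$: it maps each $+w$ edge of $s^*$ in $G$ to a $+w$ edge of $s^*$ in $G'$, and similarly for $-w$. Symmetry then yields $M_p(v, G') = M_p(\pi(v), G) = \mu_2$ for $v \in B_1$. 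But $G$ and $G'$ agree outside the out-edges of $s^*$, so IC forces $M_p(v, G') = M_p(v, G)$ for all $v \in B$; combining, $\mu_1 = \mu_2 =: \mu$.

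\textbf{Conclusion.} By equation~(\ref{eq: 0}), the contribution of $B$ to $\cl{C}(M_p(G), I_\alpha(G))$ is
\[ \frac{1}{n}\bigl( k(2\mu - 1) - k(2\mu - 1) \bigr) = 0, \]
while every other term contributes at most $1/n$ in absolute value. Hence $\cl{C}(M_p(G), I_\alpha(G)) \leq (n - 2k)/n = 1 - m$, and taking the minimum over $\cl{G}(n, \Delta)$ gives $Q_{\alpha, \Delta}(M_p) \leq 1 - m$. The main delicate point is the case analysis for the sizes of $C_W$ and $C_U$: when $k \geq \alpha n$ one takes $|C_W| = 0$ and checks that $I_\alpha(G) = B_1$ directly (since $|B_1| = k \geq \alpha n$); otherwise $|C_W| = \alpha n - k - 1 \geq 0$ and the ranks of $B_1$, $s^*$, $B_2$ are computed to place $B_1$ (but not $B_2$) inside $I_\alpha(G)$.
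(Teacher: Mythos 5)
Your key step misreads the paper's incentive-compatibility definition, and the whole single-attacker construction collapses because of it. IC (Definition~\ref{dfn: probabilistic IC}) only constrains the probability of the \emph{deviating} agent: when the weights on $E_{out}(s^*)$ change, the mechanism must keep $M_p(s^*,\cdot)$ fixed, but it is entirely free to change $M_p(v,\cdot)$ for every other $v$ --- indeed that freedom is the whole point, since $s^*$'s reviews legitimately affect other agents' scores. So your sentence ``$G$ and $G'$ agree outside the out-edges of $s^*$, so IC forces $M_p(v,G')=M_p(v,G)$ for all $v\in B$'' is false, and without it you cannot conclude $\mu_1=\mu_2$. The gap is not cosmetic: the mechanism $M(G)=\{v\in N: s(v,G)\ge 0\}$ is deterministic, symmetric, and IC (a node's score does not depend on its own out-edges), and it classifies both of your graphs $G$ and $G'$ \emph{perfectly} (selecting $C_W\cup B_1\cup\{s^*\}$ and $C_W\cup B_2\cup\{s^*\}$ respectively), so no bound below $1$ can be extracted from that pair of graphs. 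A secondary issue: the isomorphism step $M_p(v,G')=M_p(\pi(v),G)$ uses equivariance across two different graphs, which is stronger than Definition~\ref{dfn: symmetry} (isomorphic nodes within one graph); the averaged mechanism in the paper's Claim does happen to satisfy it, but you would need to say so.

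The paper's proof avoids exactly this trap: there the manipulator is one of the tracked nodes itself (a member $b$ of the clique $A^k\cup B^k$ who zeroes its \emph{own} out-weights), so IC is applied only to $b$'s own probability, $\mu_b^k=\mu_a^{k+1}$, and within-graph symmetry spreads that value over the symmetry class. Because nullifying $b$'s edges changes the graph (and its symmetry classes), one manipulation alone does not equate $\mu_a$ and $\mu_b$ in a single graph; the paper therefore runs a chain of graphs $G^0,G^1,\dots,G^X$ and telescopes the quality inequalities to force $\mu_{\max}>1$, a contradiction, for $X\approx 2m/\epsilon$. If you want to salvage your write-up, you must restructure it so that the flipped agents are the ones doing the deviating (each changing only its own out-edges), and accept that this requires a sequence of graphs and an averaging/telescoping argument rather than a single two-graph comparison.
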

\begin{proof}
	Consider two cases:\\
	
		\emph{\underline{Case I:} $ m\geq 2\alpha $.}\\	
		Let $ A,B\subseteq N $ be two disjoint subsets of size $ nm/2$. Let $ G $ be the graph in which $ A\cup B $ is a clique and there are no other edges.
	We set the weights on all the out-edges of nodes in $ A $ to be 0, and the weights on all the out-edges of nodes in $ B $ to be 1. Hence every $ a\in A $ has a score of $ s(a)=\dfrac{|B|}{|A\cup B|} $, every $ b\in B $ has a score of $ s(b)=\dfrac{|B|-1}{|A\cup B|} $ and the score of every $ c\notin A\cup B $ is 0. Since $ |A|=\dfrac{mn}{2}\geq\alpha n$, $ I_\alpha(G)=A $. Let $ M $ be an IC, probabilistic, symmetric mechanism. By the symmetry of $ M $, all the vertices of $ B $ get the same probability. Denote it $ \mu $. Let $ b $ be a distinct vertex in $ B $. Let $ G' $ be the graph we get when we nullify all the weights on the outgoing edges of $ b $. Since $ b $ is now isomorphic to the vertices in $ A $, we get by IC and the symmetry of $ M $ that $ \forall a\in A $, $ M(a,G')=M(b,G')=M(b,G)=\mu $. We see that there is a trade-off in the value of $ \mu $; on the one hand, we need it to be high if we want a good quality on $ G' $, but on the other hand, it needs to be low for a good quality on $ G $. The idea of the proof is that for $ n $ large enough, we can repeat this process and show that in essence all the vertices in $ A\cup B $ in the graph $ G $ (or at least in one of the graphs we get from $ G $ after a negligible number of steps) should have approximately the same probability. This implies that 
	\begin{flalign*}
	Q_{\alpha,\Delta}\leq\dfrac{1}{n}( |A|(2\mu-1)+|B|(1-2\mu)+|N\backslash(A\cup B)|)
	=\dfrac{2\mu-1}{n}(|A|-|B|)+1-\dfrac{|A\cup B|}{n}=1-m.	
	\end{flalign*}
	To make this claim precise, suppose for contradiction that $ Q_{\alpha,\Delta}(M)\geq1-m+\epsilon $ for some $ \epsilon>0 $. For $ k\leq X $, $ X $ to be found, we let $ A^k, B^k\subseteq N $ be two disjoint subsets with sizes $ |A^k|=mn/2+k $, $ |B^k|=mn/2-k $. Define the graph $ G^k $ in which $ A^k\cup B^k $ is a clique, and the weights on the outgoing edges of vertices in $ A^k $ are all 0, and the weights on the outgoing edges of vertices in $ B^k $ are all 1. Notice the following:
		\begin{enumerate}[label=\alph*)]
			\item The vertices in $ A^k  $ are symmetric, and so are the vertices in $ B^k $.
			\item $ I_\alpha(G^k)=A^k $.
			\item If we nullify the outgoing edges of one of the nodes $ b\in B^{k} $ then we get the graph $ G^{k+1} $in which $ b\in A^{k+1} $.
		\end{enumerate}
		By the symmetry of $ M $, we may denote for any $ k $, $ \mu_a^k=M(a,G^k) $ for all $ a\in A^k$, and $ \mu_b^k=M(b,G^k) $ for all $ b\in B^k $. By (c) and IC of $ M $, $ \mu_b^k=\mu_a^{k+1} $. By the assumption on the quality of $ M $:
		\begin{flalign*}
		&\dfrac{1}{n}(|A^k|(2\mu_a^k-1)+|B^k|(1-2\mu_b^k)+|N\backslash (A^k\cup B^k)|)\geq 1-m+\epsilon,\\
		&\dfrac{1}{n}((mn/2+k)(2\mu_a^k-1)+(mn/2-k)(1-2\mu_b^k)+(1-m)n)\geq1-m+ \epsilon,\\
		&(mn+2k)\mu_a^k-(mn-2k)\mu_b^k-2k\geq \epsilon n.
		\end{flalign*}
		Summing up this inequality for $ 0\leq k\leq X $ and substituting $ \mu_b^k $ for $ \mu_a^{k+1} $ we get,
		\begin{flalign*}
		mn\mu_a^0-(mn-2X)\mu_b^{X}+2\sum_{k=1}^{X}(2k-1)\mu_a^k-{X(X+1)}
		\geq X\epsilon n.
		\end{flalign*}
		Let $ \mu_{max}=\max\limits_{1\leq k\leq X}\mu_a^k $. If $ X=o(n) $, then for $ n $ large enough $ mn\geq 2X $. We strengthen the inequality by removing the negative terms on the left-hand side, and replacing all the $ \mu_a^k $ with $ \mu_{max} $:
		\begin{flalign*}
			\mu_{max}\left (mn+2\sum_{k=1}^{X}(2k-1)\right )\geq X\epsilon n\\
			\mu_{max}(mn+2X^2)\geq X\epsilon n\\
			\Longrightarrow \mu_{max}\geq \dfrac{\epsilon X}{m+2X^2/n}.
		\end{flalign*}
		Now taking $ X=2m/\epsilon $ we get that for $ n $ large enough $ \mu_{\max}>1 $, which is a contradiction.\\
		
		\emph{\underline{Case II:} $ m< 2\alpha $}\\
		The proof is very similar. We define three subsets, $ A, B, C $ of size $ nm/2, nm/2 $ and $ (\alpha-m/2)n $, respectively.\footnote{Since we assume $ m<2\alpha $, $ |C|>0 $.} Let $ G $ be the graph in which $ A\cup B $ is a clique and $ C $ is a cycle. Again we set all the weights on the out-edges of the nodes in $ A $ to be 0, and weights on all the out-edges of nodes in $ B $ to be 1. The edges in the cycle in $ C $ also get a weight of 1. Thus now $ s(c)=1>s(a)>s(b) $ for any $ c\in C, a\in A, b\in B $. Since $ |A\cup C|=\alpha n $, we have $ I_\alpha(G)=A\cup C $. Let $ M $ be an IC, probabilistic, symmetric mechanism. Using the same technique as before, we get that all the vertices in $ A\cup B $ should get the same probability, which we call $ \mu $. Since $ I_\alpha(G)=A\cup C $ we can bound as before:
		\begin{flalign*}
		&Q_{\alpha,\Delta}(M)\leq\dfrac{1}{n}(|A|(2\mu-1)+|B|(1-2\mu)+|N\backslash(A\cup B)|)
		=1-m.
		\end{flalign*}  
		The formal argument is precisely the same.
\end{proof}

\section{Non-trivial mechanism}\label{sec: mechanism}
In this section we show the existence of a mechanism with quality $ \alpha $, provided $ \Delta=o(n) $. Specifically, we will show the following:
\begin{prop}\label{prp: deterministic mechanism}
	For any $ \alpha $ and any $ \Delta\leq\min \{\tfrac{1}{3}\alpha n, \tfrac{1}{3}(1-\alpha)n \} $, there is an IC classification mechanism with quality $ \alpha-\dfrac{3\Delta}{n} $.
\end{prop}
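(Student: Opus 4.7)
The plan is to design the mechanism around a three–way classification of the agents based on whether their own reviews can flip their classification. For each $x$, let $r_{\max}(x)$ and $r_{\min}(x)$ denote the maximum and minimum of $r(x,G)$ as $x$'s out-edge weights vary with the rest of $G$ fixed, and set $AW=\{x:r_{\max}(x)<\alpha n\}$, $AU=\{x:r_{\min}(x)\ge\alpha n\}$, $B=N\setminus(AW\cup AU)$. The mechanism outputs $AW$ as worthy and $AU$ as unworthy; on $B$ it applies a deterministic IC rule that depends only on the graph $G_{-x}$ obtained by nullifying $x$'s out-edge weights --- the simplest candidate being to classify $x\in B$ as worthy iff $r(x,G_{-x})<\alpha n$, refined with an IC deterministic tie-break computed from $G_{-x}$.

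IC is immediate, since $r_{\max}(x)$, $r_{\min}(x)$ and any function of $G_{-x}$ do not involve $x$'s actual out-edge weights. Each of $x$'s at most $\Delta$ out-edges can flip at most one agent's ``$s(y)>s(x)$'' bit, so $r_{\max}(x)-r_{\min}(x)\le\Delta$; hence $AW\subseteq I_\alpha(G)$, $AU\subseteq\overline{I_\alpha}(G)$, and every error lies in $B$. Moreover for each $x\in B$ we have $r(x,G)\in[\alpha n-\Delta,\alpha n+\Delta-1]$, so $B$ is contained in the set of agents whose actual rank lies in this window of width $2\Delta$ around $\alpha n$.

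The heart of the proof is a bound on the number of borderline errors. A bucket-counting argument on the rank window $[\alpha n-\Delta,\alpha n+\Delta-1]$ shows $|B|\le(1-\alpha)n+\Delta$: the extremal configuration is a single tie bucket of size $(1-\alpha)n+\Delta$ straddling the $\alpha n$-th position --- exactly the structure exploited in the impossibility of Proposition~\ref{prp: small delta}. The borderline rule is then designed to correctly classify at least half of $B$ up to an additive $\Delta$ slack, so that the total error is at most $\tfrac{|B|}{2}+\Delta\le\tfrac{(1-\alpha)n+3\Delta}{2}$, which via $\cl{C}=1-\tfrac{2e}{n}$ translates into quality $\ge\alpha-3\Delta/n$.

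The main obstacle, and what produces the $3\Delta/n$ slack relative to the probabilistic version's $\alpha$ bound, is designing and analysing the deterministic borderline rule in the presence of ties. Adversarial tie configurations at the boundary score can cause the naive ``worthy iff $r(x,G_{-x})<\alpha n$'' rule to misclassify every tied borderline agent when those agents review one another with sign-aligned weights; the rule must therefore be refined --- for example by an IC tie-break on $G_{-x}$ that exploits the structure of the tied component, or equivalently by a derandomisation of the $50/50$ probabilistic rule on $B$ that preserves IC while losing at most $O(\Delta)$ compared to its expectation. The assumption $\Delta\le\min\{\alpha n/3,(1-\alpha)n/3\}$ is used both to keep the rank window away from the extremes $0$ and $n$ in the bucket-counting step and to ensure that the structural bound $|B|\le(1-\alpha)n+\Delta$ and the additive derandomisation slack remain compatible.
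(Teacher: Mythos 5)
Your skeleton matches the paper's: fix (or extremize) an agent's own out-weights to get an IC three-way partition into absolutely worthy, absolutely unworthy and borderline agents, accept/reject the first two sets, and handle the borderline set $B$ with a deterministic IC rule. Your supporting observations are also sound, e.g.\ $r_{\max}(x)-r_{\min}(x)\le\Delta$ and the bucket bound $|B|\le(1-\alpha)n+\Delta$ (the paper instead accounts via $|W|\ge(\alpha-\Delta/n)n$, but your accounting would serve the same purpose). The genuine gap is that the entire difficulty of the proposition --- constructing a concrete deterministic IC borderline rule and proving it loses only $O(\Delta)$ relative to the $50/50$ randomization --- is left as an aspiration. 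You yourself name it ``the main obstacle'' and offer only directions (``an IC tie-break on $G_{-x}$'', ``a derandomisation \dots losing at most $O(\Delta)$''), with no construction and no argument that any such rule exists. The paper's proof consists almost entirely of this step: it orders $B$ by $r(\cdot,G)$ with a lexicographic tie-break, accepts $x\in B$ iff $x$ lies in the top half of $B_x=B(G_x)$ (so the decision ignores $x$'s own reviews), and then proves Lemma~\ref{lem: deterministic mechanism} by a case split on whether $|I_\alpha(G)|\ge\alpha n+2\Delta$ --- in the first case all of $B$ is worthy and the top $\tfrac12(|B|-\Delta)$ agents are shown to be accepted because nullifying one out-edge moves an agent's position in $B_x$ by at most one per edge; in the second case the bottom $\tfrac12(|B|-|U|)-\Delta$ agents of $B$ are shown to be unworthy and rejected even though nullification may inject all of $U$ into $B_x$ --- together with the separate easy case $|U|\ge|B|-2\Delta$. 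None of this analysis, nor any substitute for it, appears in your proposal.

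A second, more subtle problem: the intermediate lemma you posit (``the borderline rule correctly classifies at least $\tfrac{|B|}{2}-\Delta$ agents of $B$'') is \emph{stronger} than what the paper establishes. The paper's guarantee is about $B\cup U$ and leans on the fact that all of $U$ is automatically classified correctly; in its Case II the number of guaranteed-correct agents inside $B$ alone is only $\tfrac12(|B|-|U|)-\Delta$, which is far below $|B|/2$ when $|U|$ is large, and the quality bound is rescued by crediting $U$ and by the trivial case $|U|\ge|B|-2\Delta$. So even granting your architecture, the specific derandomisation lemma your accounting relies on is not known to hold as stated and would need its own proof (or you would need to switch to the weaker $B\cup U$ formulation plus the $|W|\ge(\alpha-\Delta/n)n$ bound, as the paper does). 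As it stands, the proposal identifies the right framework but leaves the theorem's actual content unproved.
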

For better exposition, we start by presenting a probabilistic mechanism which is slightly better.
\begin{prop}\label{prp: probabilistic mechanism}
	For any $ \alpha $ and any $ \Delta\leq\alpha n $, there is an IC probabilistic classification mechanism with quality $ \alpha-\dfrac{\Delta}{n} $.
\end{prop}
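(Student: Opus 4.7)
The plan is to implement the ``absolutely worthy / borderline / absolutely unworthy'' classification hinted at in the introduction. For each agent $x$, let $G_x^{+}$ (respectively $G_x^{-}$) be the graph obtained from $G$ by overwriting every weight on $x$'s out-edges to $+1$ (respectively $-1$), and let $r^{+}(x), r^{-}(x)$ be the rank of $x$ in those auxiliary graphs. Since $s(x)$ depends only on $x$'s in-edges, it is identical in $G$, $G_x^{+}$ and $G_x^{-}$. I would then define
\[
M_p(x,G)=\begin{cases}1,&r^{+}(x)<\alpha n\quad(\text{absolutely worthy}),\\0,&r^{-}(x)\geq \alpha n\quad(\text{absolutely unworthy}),\\[2pt] 1/2,&\text{otherwise (borderline)}.\end{cases}
\]
Incentive compatibility is immediate: the quantities $r^{\pm}(x)$ are computed on graphs in which $x$'s out-weights have been replaced by the fixed constants $\pm 1$, so $M_p(x,G)$ has no dependence on what $x$ actually reports.

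The central quantitative step will be the sensitivity inequality
\[
r(x,G)-\Delta \;\leq\; r^{-}(x)\;\leq\; r(x,G)\;\leq\; r^{+}(x)\;\leq\; r(x,G)+\Delta.
\]
The reason is that $G$ and $G_x^{+}$ differ only on edges leaving $x$, so the only scores that can change are those of agents in $E_{out}(x)$, a set of size at most $\Delta$; for each such $y$, the boolean ``$s(y)>s(x)$'' can flip at most once. Two consequences follow: (i) if $r(x,G)<\alpha n-\Delta$ then $r^{+}(x)<\alpha n$, so $x$ is absolutely worthy, and (ii) every absolutely worthy agent is correctly classified, since $r(x,G)\leq r^{+}(x)<\alpha n$ forces $x\in I_\alpha(G)$. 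A symmetric statement holds for $A_U\subseteq\overline{I_\alpha}(G)$.

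Next I would lower-bound $|A_W|$ by the elementary rank-counting fact that for every $0\le k\le n$ at least $k$ agents satisfy $r(\cdot)<k$. Indeed, sorting the agents by decreasing score, the $j$-th agent has rank at most $j-1$, so the top $k$ agents all have rank strictly less than $k$. Applying this with $k=\alpha n-\Delta$ (which is non-negative by the hypothesis $\Delta\leq\alpha n$) and combining with (i) above gives $|A_W|\geq \alpha n-\Delta$.

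Finally, I would plug into identity~(\ref{eq: 0}). Absolutely worthy agents contribute $(2\cdot 1-1)\cdot(+1)=1$, absolutely unworthy agents contribute $(2\cdot 0-1)\cdot(-1)=1$, and borderline agents contribute $(2\cdot\tfrac12-1)\cdot(\pm 1)=0$, so
\[
\cl{C}(M_p(G),I_\alpha(G))\;\geq\;\frac{|A_W|+|A_U|}{n}\;\geq\;\frac{|A_W|}{n}\;\geq\;\alpha-\frac{\Delta}{n},
\]
uniformly over $G\in\cl{G}(n,\Delta)$, which is exactly the claimed quality. The only step requiring real care is the sensitivity bound together with the subtlety that ties can make $|I_\alpha(G)|$ strictly larger than $\alpha n$; the rank-counting fact $|\{x:r(x)<k\}|\geq k$ is precisely what neutralises that subtlety, avoiding the need to reason about how many agents land in any particular rank window.
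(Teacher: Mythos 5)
Your proposal is correct and follows essentially the same route as the paper: overwrite $x$'s out-weights with fixed constants to get an IC worthy/borderline/unworthy trichotomy, give borderline agents probability $1/2$, and use the fact that an agent's reviews shift his rank by at most $\Delta$ to conclude that every absolutely worthy (resp.\ unworthy) agent lies in $I_\alpha(G)$ (resp.\ $\overline{I_\alpha}(G)$) and that at least $\alpha n-\Delta$ agents are absolutely worthy. The only cosmetic difference is that the paper declares $x$ absolutely worthy when his rank in the all-$(-1)$ graph $G_x$ is below $\beta n=(\alpha-\Delta/n)n$, while you use rank below $\alpha n$ in the all-$(+1)$ graph $G_x^{+}$; your explicit rank-counting fact is precisely the step the paper leaves implicit when asserting $|W(G)|\geq\beta n$.
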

For a graph $ G $ and $ x\in G $, denote by $ G_x $ the graph we get when we set all the weights on the out-edges of $ x $ to be -1. Let $ \beta=\alpha-\dfrac{\Delta}{n} $. The idea is to partition $ N $ into three subsets:
\begin{flalign*}
W(G)&=\{x\in N| x\in I_\beta(G_x) \},\\
U(G)&=\{x\in N| x\in\overline{I_\alpha}(G_x) \},\\
B(G)&=N\backslash (W\cup U) =\{x\in N| x\in I_\alpha(G_x)\backslash I_\beta(G_x)  \}.
\end{flalign*}
Notice that the definitions of $ W, U, B $ are IC, in the sense that the sorting of $ x $ to one of these sets does not depend on his own reviews (since we fixed all his reviews to -1 before choosing his set). If $ x\in W $, then his ranking in $ G_x $ is less than $ \beta n $: $ r(x, G_x)<\beta n $. The real reviews of $ x $ may increase the score of at most $ \Delta $ agents, which means that $ r(x,G)\leq r(x,G_x)+\Delta<\beta n+\Delta=\alpha n $; hence $ x\in I_\alpha(G) $. We have proved that $ W\subseteq I_\alpha(G) $. We think of $ W $ as the \emph{absolutely worthy} agents. Similarly, the set $ U $ is the set of \emph{absolutely unworthy} agents: for $ x\in U $, $ r(x,G)\geq r(x,G_x)\geq\alpha n $ (increasing his reviews can only hurt his relative ranking), and $ U\subseteq \overline{I_\alpha}(G) $. The set $ B $ contains all the borderline agents: these are the agents which we are not sure how to classify. We define the following probabilistic mechanism:
\begin{flalign*}
M_p(x,G)=\begin{cases}
1,&x\in W(G),\\
0,&x\in U(G),\\
1/2,&x\in B(G).
\end{cases}
\end{flalign*}
Since $ W, U, B $ are IC, this mechanism is IC. The mechanism is always correct in the classification of the agents in $ W\cup U $. We set the probability of agents in $ B $ to be 1/2 so that the expected contribution of the agents in $ B $ to the coincidence measure is zero. We get that for any $ G\in\cl{G}(n,\Delta) $ with $ \Delta\leq\alpha n $,
\begin{flalign*}
\cl{C}(M_p(G),I_\alpha(G))=\dfrac{1}{n}(|W(G)|+|U(G)|)\geq \dfrac{|W(G)|}{n}
\geq \beta
=\alpha-\dfrac{\Delta}{n}.
\end{flalign*}
This completes the proof of Proposition~\ref{prp: probabilistic mechanism}. \\
The idea for the deterministic mechanism is similar. This mechanism selects as worthy all the agents in $ W $ and as unworthy all the agents in $ U $. For the agents in $ B $ we need to find an IC, deterministic way to classify them such that in every network about half of them are rightly classified. Notice first that if $ |U|\geq |B|-2{\Delta} $ then 
\begin{flalign*}
\cl{C}(M(G),I_\alpha(G))\geq\dfrac{1}{n}\left(|W|+|U|-|B| \right) \geq \dfrac{1}{n}(\beta n-2\Delta)
=\alpha-\dfrac{3\Delta}{n}.
\end{flalign*}
Let $ \cl{L}(B) $ be a linear ordering of $ B $ according to $ r(\cdot,G) $ and breaking ties lexicographically.\footnote{That is, for $ x,y\in B $, $ x\succ_{\cl{L}}y\iff (r(x)>r(y))\vee ((r(x)=r(y))\wedge (x<y)) $.} 
Let $ B^+ $ be the top half nodes in $ B $ according to $ \cl{L} $.
For $ x\in B $ we denote by $ B_x $ the set $ B $ in $ G_x $.\footnote{To be clear, for $ x,y\in N $, let $ G_{x,y} $ be the graph in which we set all the weights on the outgoing edges of $ x $ and $ y $ to -1. Then $ B_x(G)=B(G_x)=\{y\in N| y\in I_\alpha(G_{x,y})\backslash I_{\beta}(G_{x,y}) \} $.} Similarly, let $ B_x^+=B^+(G_x) $. We now complete our definition of mechanism $ M $ by setting for every $ x\in B $, $ x\in M(G)\iff x\in B_x^+ $. That is, $ x\in B $ is accepted if and only if it is ranked in the top half of $ B_x $ when breaking ties lexicographically. The mechanism does not use the weights on the outgoing edges of $ x $ to determine its classification, hence it is IC. 
We complete the proof of Proposition~\ref{prp: deterministic mechanism} with the following lemma.
\begin{lem}\label{lem: deterministic mechanism}
	Suppose that $ \Delta\leq \dfrac{1}{3}(1-\alpha) n $.	For any graph $ G\in\cl{G}(n, \Delta) $ with $ |U|< |B|-2{\Delta} $ at least $ \dfrac{1}{2}(|B\cup U|-\Delta) $ of the agents in $ B\cup U $ are classified correctly.
\end{lem}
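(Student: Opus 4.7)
The plan is to handle $U$ and $B$ separately. Since $U\subseteq\overline{I_\alpha}(G)$, every agent in $U$ is correctly classified by $M$ as unworthy, contributing $|U|$ correct classifications; it therefore suffices to show that at least $\tfrac{1}{2}(|B|-|U|-\Delta)$ agents in $B$ are correctly classified.

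For the analysis on $B$ I would rely on a monotonicity property of $G\mapsto G_x$: the transformation only pushes $x$'s outgoing reviews down to $-1$, so the scores of $x$'s (at most $\Delta$) reviewees can only decrease, while $s(x,G_x)=s(x,G)$. Hence for every $y\neq x$, ``$y\succ_{\cl{L}} x$ in $G_x$'' implies ``$y\succ_{\cl{L}} x$ in $G$'', and the two relations coincide except possibly for the at most $\Delta$ reviewees of $x$. Writing $p(x)$ for the position of $x$ in the $\cl{L}$-ordering of $B$ induced by $r(\cdot,G)$ and $p_x(x)$ for its position in $B_x$ induced by $r(\cdot,G_x)$, these two facts combine to give the two-sided bound
\[
p(x)-|B\setminus B_x|-\Delta \;\le\; p_x(x) \;\le\; p(x)+|B_x\setminus B|.
\]

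Plugging this into the mechanism's rule $p_x(x)<|B_x|/2$ shows that $x\in B$ is surely accepted whenever $p(x)<\tfrac{1}{2}(|B\cap B_x|-|B_x\setminus B|)$ and surely rejected whenever $p(x)\ge\tfrac{1}{2}(|B\cap B_x|+|B_x\setminus B|)+|B\setminus B_x|+\Delta$, so at most $|B\triangle B_x|+\Delta$ agents of $B$ can land in the intermediate ``slack band''. Sorting $B$ by $\cl{L}$ in $G$ puts $T:=B\cap I_\alpha(G)$ at the top $|T|$ positions and $\overline{T}:=B\setminus T$ at the bottom, so every $x\in T$ outside the slack is correctly accepted and every $x\in\overline{T}$ outside the slack is correctly rejected; a direct count then yields at least $|B|/2+\min(|T|,|\overline{T}|)-(|B\triangle B_x|+\Delta)$ correctly classified agents of $B$.

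The main obstacle is controlling $|B\triangle B_x|$ uniformly in $x$. Since $y\in B\iff r(y,G_y)\in[\beta n,\alpha n)$ and $y\in B_x\iff r(y,G_{x,y})\in[\beta n,\alpha n)$, with $|r(y,G_y)-r(y,G_{x,y})|\le\Delta$, the symmetric difference $B\triangle B_x$ is confined to agents whose rank in $G_y$ lies within $\Delta$ of one of the endpoints $\beta n$ or $\alpha n$. Combining this with the hypothesis $|U|<|B|-2\Delta$ (which keeps $B$ large compared to the slack) and $\Delta\le\tfrac{1}{3}(1-\alpha)n$ (which forces $|\overline{T}|\le(1-\alpha)n-|U|$ to stay well below $|B|$), I would absorb the slack terms and arrive at the target $\tfrac{1}{2}(|B|-|U|-\Delta)$ correct classifications in $B$; together with the $|U|$ correct agents from $U$, this gives the claimed $\tfrac{1}{2}(|B\cup U|-\Delta)$.
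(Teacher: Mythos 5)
Your reduction (count all of $U$ as correct, then certify $\tfrac12(|B|-|U|-\Delta)$ correct classifications inside $B$) and your two-sided position bound are fine as far as they go (though with the paper's convention $y\succ_{\cl{L}}x$ means $y$ is ranked \emph{worse}, so the monotonicity implication you state is reversed; the correct statement is that anyone ranked above $x$ in $G_x$ is ranked above $x$ in $G$). The genuine gap is exactly the step you defer: a uniform bound on $|B\triangle B_x|$. Knowing that membership changes are confined to agents whose rank in $G_y$ lies within $\Delta$ of $\beta n$ or $\alpha n$ does not bound their \emph{number}: ranks are defined with ties, so arbitrarily many agents with identical scores share a single rank value, and they can all sit exactly at a threshold. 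For instance, take a large group of agents of $B$ tied exactly at rank $\beta n$ (this is compatible with $|U|<|B|-2\Delta$, since such a group makes $B$ large); if a single reviewee of $x$ drops below their common score in $G_x$, every one of them falls to rank $\beta n-1$ and leaves $B_x$ simultaneously, so $|B\triangle B_x|=\Theta(n)$ and your slack band swallows all of $B$. A symmetric example with a tied group at rank $\alpha n$ floods $B_x$ from the $U$ side. Moreover, even granting $|B\triangle B_x|=O(\Delta)$ (indeed even $=0$), your symmetric band still loses the additive $\Delta$ coming from the rejection-side bound $p_x(x)\ge p(x)-|B\setminus B_x|-\Delta$; in graphs where $\overline{T}=\emptyset$ and $U=\emptyset$ (these exist under the hypothesis, e.g.\ with $I_\alpha(G)=N$ and $|B|>2\Delta$) you certify at most $|B|/2-\Delta$ correct agents of $B$, while the lemma demands $\tfrac12(|B|-\Delta)=|B|/2-\Delta/2$. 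So the route as proposed cannot reach the stated constant, independently of how well you control $|B\triangle B_x|$.

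The paper avoids both problems by splitting on $|I_\alpha(G)|$ and arguing one-sidedly instead of bounding $|B\triangle B_x|$. When $|I_\alpha(G)|\ge\alpha n+2\Delta$, both $B$ and every $B_x$ are contained in $I_\alpha(G)$, so agents entering from the $U$ side are harmless (any accepted agent is correct), and each out-edge of $x$ causes at most one change that hurts $x$'s standing (only the reviewee itself can enter $B$ above $x$ or leave $B$ from below $x$; all other ranks can only decrease); crucially, an agent entering above $x$ also enlarges $B_x$, so each hurtful event costs only half a position against the threshold $|B_x|/2$ --- this is how the top $\tfrac12(|B|-\Delta)$ of $B$, rather than only the top $|B|/2-\Delta$, is shown to be (correctly) accepted. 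When $|I_\alpha(G)|<\alpha n+2\Delta$, the hypotheses $\Delta\le\tfrac13(1-\alpha)n$ and $|U|<|B|-2\Delta$ are used to show that the bottom $\tfrac12(|B|-|U|)-\Delta$ of $B$ lies in $\overline{I_\alpha}(G)$ and stays in the bottom half of $B_x$ even if all of $U$ enters $B_x$, hence is correctly rejected. To repair your argument you would need to recover this kind of case analysis and track on which side of $x$ the changes to $B$ occur, rather than absorb a uniform bound on $|B\triangle B_x|$.
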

\noindent Using this lemma we get that for any such graph,
\begin{flalign*}
	\cl{C}(M(G),I_\alpha(G))&\geq \dfrac{1}{n}\left ( |W|+\dfrac{1}{2}(|B\cup U|-\Delta)-\dfrac{1}{2}(|B\cup U|+\Delta)\right ) \\
	&\geq \dfrac{1}{n}\left ( |W|-\Delta\right )\geq\dfrac{1}{n}\left(\beta n -\Delta\right)=\alpha-\dfrac{2\Delta}{n},
\end{flalign*}
as required.
\begin{proof}[Proof of Lemma~\ref{lem: deterministic mechanism}]
	Consider two cases.\\
	\emph{\underline{Case I:} $|I_\alpha(G)|\geq \alpha n+2\Delta $}.\\
	 In this case $ B\subseteq I_\alpha\backslash I_\beta $, since if $ x\notin I_\alpha(G) $, then in $ G_x $ there are at least $ |I_\alpha|-\Delta\geq \alpha n+\Delta $ agents with a higher score than $ s(x) $; hence $ x\notin I_\alpha(G_x) $. Moreover, for any $ x\in B $, $ B_x\subseteq I_\alpha(G) $, since for any $ y\notin I_\alpha $, there are in $ G_{x,y} $ at least $ |I_\alpha|-2\Delta\geq\alpha n $ agents with a higher score than $ s(y) $. Let $ B^* $ be the top $ \tfrac{1}{2}(|B|-\Delta) $ ranked agents in $ B $ according to $ \cl{L}(B) $.
	  We claim that the agents in $ B^* $ are all classified by our mechanism to be worthy, which is the correct classification (since $ B\subset I_\alpha(G) $, as explained). The reason is that when we set the weight on an edge $ (x,y) $ with $ x\in B^* $ to -1, we add at most one agent to $ B $ which is ranked above $ x $ (this might happen when $ y\in W $), or we remove at most one agent from $ B $ which is ranked below $ x $ (this might happen when $ y\in B $ and is ranked below him). Thus $ x $ must be in the top-half ranked agents in $ B_x $. Since all the agents in $ U $ are correctly classified, we get that at least $ |B^*|+|U|=\tfrac{1}{2}(|B|+2|U|-\Delta) $ are classified correctly.\\
	\emph{\underline{Case II:} $ |I_\alpha(G)|< \alpha n+2\Delta $}.\\	 
	Let $ B_* $ be the bottom $ \tfrac{1}{2}(|B|-|U|)-\Delta $ ranked agents in $ B $ according to $ \cl{L}(B)$.\footnote{Notice that $ |B_*| > 0 $ due to our assumption on $ |U| $. } Since $ |B_*\cup U|=\tfrac{1}{2}(|B|+|U|)-\Delta\leq\tfrac{1}{2}(1-\beta)n-\Delta=\tfrac{1}{2}(1-\alpha)n-\tfrac{1}{2}\Delta $, and $ |\overline{I_\alpha}(G)|=n-I_\alpha(G)>(1-\alpha)n-2\Delta\geq \tfrac{1}{2}(1-\alpha)n-\tfrac{1}{2}\Delta $, we get that $ B_*\subseteq \overline{I_\alpha}(G) $. It is therefore enough to show that for any $ x\in B_* $, $ x\notin B_x^+ $. Indeed, lowering the weights on the out-edges of $ x $ can advance $ x $ in the ranking of $ B $ by at most $ \Delta $ places (if the edges are to nodes in $ B $ that are ranked higher than $ x $), and it might also add all the nodes of $ U $ to $ |B| $; in any case $ x $ will still be in the bottom half of $ B_x $.
\end{proof}

\section{Conclusions}\label{sec: conclusions}
We have introduced a generic model for the classification of agents to worthy and unworthy according to their own reviews of each other. We draw two general conclusions regarding the existence of an IC classification mechanism:
\begin{enumerate}
	\item If $ \Delta $ is large, there is no good mechanism in the sense that the best mechanisms do not take into consideration the reviews.
	\item If $ \Delta $ is negligible with respect to the number of agents, there is a mechanism with a positive quality, but not with quality 1.
\end{enumerate}
Our measure for the coincidence between the selected set and the true worthy agents (\ref{eq: coincidence measure}) assumes that every classification/misclassification has the same value/price. In other applications it makes more sense to consider only the classification/misclassification of the worthy agents, which leads to the following measure:
\begin{flalign*}
\cl{C}(M(G),I_\alpha(G))=\dfrac{1}{|I_\alpha(G)|}\sum_{x\in M(G)}\begin{cases}
1,&x\in I_\alpha(G),\\
-1,&x\notin I_\alpha(G).
\end{cases}
\end{flalign*}
Yet another measure might consider all the agents but normalise the classification/misclassification of an agent according to his true set:\footnote{Though here we need to assume that $ \overline{I_\alpha}(G)  $ is not empty, or define the measure for this case separately.}
\begin{flalign*}
&\cl{C}(M(G),I_\alpha(G))=\dfrac{|M(G)\cap I_\alpha(G)|}{2|I_\alpha(G)|}+\dfrac{|\overline{M}(G)\cap \overline{I_\alpha}(G)|}{2|\overline{I_\alpha}(G)|}.
\end{flalign*}
We mention here that using these measures does not change our two conclusions above; in other words, our conclusions are intrinsic in the problem and not the result of a specific measure. The exact claims and proofs can be found in Appendix~\ref{sec: apx}.

%\nocite{*}
\bibliographystyle{siam}
\bibliography{paper}

\appendix

\section{Two alternative measures}\label{sec: apx}

Consider the following two alternative coincidence measures and the appropriate quality measures. 
\begin{flalign*}
\cl{C}'(M(G),I_\alpha(G))&=\dfrac{1}{|I_\alpha(G)|}\sum_{x\in M(G)}\begin{cases}
1,&x\in I_\alpha(G)\\
-1,&x\notin I_\alpha(G)
\end{cases}=\dfrac{|M(G)\cap I_\alpha(G)|-|M(G)\cap\overline{I_\alpha}(G)|}{|I_\alpha(G)|}.\\
%&=\dfrac{2|M(G)\cap I_\alpha(G)|-|M(G)|}{|I_\alpha(G)|}.\\
Q_{\alpha,\Delta}'(M_p)&=\lim\limits_{n\to\infty}\;\min_{G\in\cl{G}(n,\Delta)}\cl{C}'(M_p(G),I_\alpha(G)).\\
%\end{flalign*}
%\begin{flalign*}
\cl{C}''(M(G),I_\alpha(G))&=\begin{cases}
\dfrac{|M(G)\cap I_\alpha(G)|}{2|I_\alpha(G)|}+\dfrac{|\overline{M}(G)\cap \overline{I_\alpha}(G)|}{2|\overline{I_\alpha}(G)|},&\overline{I_\alpha}(G)\neq\emptyset\\
\dfrac{|M(G)|}{2n}+\dfrac{1}{2},&\overline{I_\alpha}(G)=\emptyset.
\end{cases}\\
Q_{\alpha,\Delta}''(M_p)&=\lim\limits_{n\to\infty}\;\min_{G\in\cl{G}(n,\Delta)}\cl{C}''(M_p(G),I_\alpha(G)).
\end{flalign*}
In this appendix we prove upper and lower bounds on $ Q_{\alpha,\Delta}' $ and $ Q_{\alpha,\Delta}'' $ which lead to the following graphs parallel to the graphs in Figure~\ref{fig: main results}. Just as with Figure~\ref{fig: main results}~(b), the mechanisms which give the best qualities in Figure~\ref{fig: appx results 1}~(b) and Figure~\ref{fig: appx results 2}~(b) are trivial.

\begin{figure}[h]
	%\begin{table}
	%	\setlength{\tabcolsep}{5mm} % separator between columns
	%	\def\arraystretch{1} % vertical stretch factor
		\centering
	\begin{tabular}{l|l}
		%		\hline
		\begin{tikzpicture}
		\begin{axis}[
		width=0.3\textwidth,
		height=0.25\textwidth,
		axis lines=middle,
		xlabel=$\alpha$,
		ylabel=$Q'$,
		enlargelimits,
		ytick={1},
		yticklabels={1},
		xtick={1},
		xticklabels={1}]
		
		\addplot[name path=F,line width=0.5mm,red,domain={0:1}] {(1+3*x)/(2+2*x)};
		%		 node[pos=.8, above]{$f$};
		
		\addplot[name path=G,line width=0.5mm,blue,domain={0:1}] {x};
		%		node[pos=.1, below]{$g$};
		
		\addplot[pattern=north west lines, pattern color=green!50]fill between[of=F and G, soft clip={domain=0:1}];
		\addplot[name path=H,dashed,gray,domain={0:1}] {1};		
		
		\end{axis}
		\node at (50pt,-20pt) {(a)};
		\end{tikzpicture} &
		\begin{tikzpicture}
		\begin{axis}[
		width=0.3\textwidth,
		height=0.25\textwidth,		
		axis lines=middle,
		xlabel=$\alpha$,
		ylabel=$Q'$,
		enlargelimits,
		ytick={1},
		yticklabels={1},
		xtick={0.5,1},
		xticklabels={$ \tfrac{1}{2} $,1}]
		
		\addplot[name path=F,line width=0.5mm,black!60!green,domain={0:1/2}] {0};
		%		 node[pos=.8, above]{$f$};
		
		\addplot[name path=G,line width=0.5mm,black!60!green,domain={1/2:1}] {2-1/x};
		%		%		\addplot[name path=H,dashed,gray,domain={0:1}] {1};		
		\end{axis}
		\node at (50pt,-20pt) {(b)};		
		\end{tikzpicture} 		
		%		\\ \hline
	\end{tabular}
	\caption{Our bounds for $ Q_{\alpha,\Delta}' $ as a function of $ \alpha $ when (a) $ \Delta=o(n) $ and (b) $ \Delta=n-1 $.}
	\label{fig: appx results 1}
	%\end{table}
\end{figure}

\begin{figure}[h]
	%\begin{table}
	%	\setlength{\tabcolsep}{5mm} % separator between columns
	%	\def\arraystretch{1} % vertical stretch factor
	\centering
	\begin{tabular}{l|l}
		%		\hline
		\begin{tikzpicture}
		\begin{axis}[
		width=0.3\textwidth,
		height=0.25\textwidth,
		axis lines=middle,
		xlabel=$\alpha$,
		ylabel=$Q''$,
		enlargelimits,
		ytick={0, 0.5,0.875, 1},
		yticklabels={o, $ \tfrac{1}{2} $,$ \tfrac{7}{8} $,1},
		xtick={1},
		xticklabels={1},
		xmin=0,
		xmax=1,
		ymin=0,
		ymax=1
		]
		
		\addplot[name path=F,line width=0.5mm,red,domain={0:1}] {3/4+1/(4*(2-x))};
		%		 node[pos=.8, above]{$f$};
		
		\addplot[name path=G,line width=0.5mm,blue,domain={0:1}] {(1+x)/2};
		%		node[pos=.1, below]{$g$};
		
		\addplot[pattern=north west lines, pattern color=green!50]fill between[of=F and G, soft clip={domain=0:1}];
		\addplot[name path=H,dashed,gray,domain={0:1}] {1};		
		
		\end{axis}
		\node at (50pt,-20pt) {(a)};
		\end{tikzpicture} &
		\begin{tikzpicture}
		\begin{axis}[
		width=0.3\textwidth,
		height=0.25\textwidth,		
		axis lines=middle,
		xlabel=$\alpha$,
		ylabel=$Q''$,
		enlargelimits,
		ytick={0.5,1},
		yticklabels={$ \tfrac{1}{2} $,1},
		xtick={1},
		xticklabels={1}
		]
		
		%		\addplot[name path=F1,line width=0.5mm,red,domain={0:1/3}] {2*x/(1+x)};
		\addplot[name path=F1,line width=0.5mm,black!60!green,domain={0:1}] {1/2};
		%		\addplot[name path=F2,line width=0.5mm,red,domain={1/2:1}] {3/2-1/(2*x)};
		%		\addplot[name path=G,line width=0.5mm,blue,domain={1/2:1}] {1/2};		
		
		%		\addplot[pattern=north west lines, pattern color=green!50]fill between[of=F1 and G, soft clip={domain=0:1/2}];
		%		\addplot[pattern=north west lines, pattern color=green!50]fill between[of=F2 and G, soft clip={domain=1/2:1}];
		%		\addplot[pattern=north west lines, pattern color=green!50]fill between[of=F3 and G, soft clip={domain=1/2:1}];				
		
		\end{axis}
		\node at (50pt,-20pt) {(b)};		
		\end{tikzpicture} 		
		%		\\ \hline
	\end{tabular}
	\caption{Our bounds for $ Q_{\alpha,\Delta}'' $ as a function of $ \alpha $ when (a) $ \Delta=o(n) $ and (b) $ \Delta=n-1 $.}
	\label{fig: appx results 2}
	%\end{table}
\end{figure}

\subsection{Bounds on $ Q_{\alpha,\Delta}' $}
To prove the lower bound of  we define two trivial schemes: $ M_N(G)\equiv N $ and $ M_\emptyset(G)\equiv\emptyset $. For any graph $ G $,
\begin{flalign*}
\cl{C}'(M_N(G), I_\alpha(G))&=\dfrac{|N\cap I_\alpha(G)|-|N\cap\overline{I_\alpha}(G)|}{|I_\alpha(G)|}=1-\dfrac{|\overline{I_\alpha}(G)|}{|I_\alpha(G)|}\geq 1-\dfrac{1-\alpha}{\alpha}=2-\dfrac{1}{\alpha}.\\
\cl{C}'(M_\emptyset(G),I_\alpha(G))&=\dfrac{|\emptyset\cap I_\alpha(G)|-|\emptyset\cap\overline{I_\alpha}(G)|}{|I_\alpha(G)|}=0.
\end{flalign*}
Thus when $ \alpha\leq \dfrac{1}{2} $ we have a trivial mechanism with quality 0 and when $ \alpha\geq \dfrac{1}{2} $ we have a trivial mechanism with quality $ 2-\dfrac{1}{\alpha} $, which is exactly the graphs in Figure~\ref{fig: appx results 1}~(b). The next proposition proves that when $ \Delta=n-1 $ there are no better mechanisms.
\begin{prop}\label{prp: apx1 high delta}
	Let $ m=\min\left \{2(1-\alpha), \dfrac{\Delta}{n} \right \} $. Then $ Q_{\alpha,\Delta}'\leq \max\left \{1-\dfrac{m}{2\alpha},0 \right \}$.
\end{prop}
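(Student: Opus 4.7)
The plan is to mirror the proof of Proposition~\ref{prp: large delta}, with the only real change being the replacement of the denominator $n$ by $|I_\alpha(G)|$ in the measure. First I would verify that the symmetrisation reduction (the claim preceding Proposition~\ref{prp: small delta}) applies verbatim for $\cl{C}'$: the measure decomposes as $\tfrac{1}{|I_\alpha(G)|}\sum_{x\in N}M_p(x,G)\sigma_G(x)$ with $\sigma_G(x)=\pm 1$ according to whether $x\in I_\alpha(G)$, and since any permutation $\pi$ sends $I_\alpha(G)$ bijectively onto $I_\alpha(G_\pi)$, both $|I_\alpha(\cdot)|$ and the sign $\sigma_G(\cdot)$ are preserved under the change of variables $y=\pi(x)$. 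Hence averaging $M_p$ over $S(N)$ yields a symmetric IC mechanism of no worse $Q'$, and it suffices to argue about symmetric, IC, probabilistic mechanisms.

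The body then splits into the same two cases as Proposition~\ref{prp: large delta}. In Case I ($m\ge 2\alpha$, target $0$) I would reuse the clique construction on $A^k\cup B^k$ of sizes $mn/2+k$ and $mn/2-k$, with $A$-out-edges weighted $0$ and $B$-out-edges weighted $1$, so that $I_\alpha(G^k)=A^k$ (in particular $|I_\alpha|=mn/2+k$) and, by IC, zeroing a $b\in B^k$ yields $\mu_b^k=\mu_a^{k+1}$. Assuming $Q'_{\alpha,\Delta}\ge \epsilon$ for contradiction, using $\mu_o^k\ge 0$ and clearing the denominator turns $\cl{C}'(M(G^k),I_\alpha(G^k))\ge \epsilon$ into $(mn/2+k)\mu_a^k-(mn/2-k)\mu_b^k\ge (mn/2+k)\epsilon$. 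Summing over $k=0,\dots,X$, telescoping exactly as in the original proof, and upper-bounding by $\mu_{\max}$ yields $\mu_{\max}(mn/2+X^2)\ge (X+1)mn\epsilon/2$; taking $X=\lceil 2/\epsilon\rceil$ and $n$ large forces $\mu_{\max}>1$. In Case II ($m<2\alpha$, target $1-m/(2\alpha)$) I would additionally attach the weight-$1$ cycle $C$ of size $(\alpha-m/2)n$, giving $I_\alpha(G^k)=A^k\cup C$ with $|I_\alpha|=\alpha n+k$. Assuming $Q'_{\alpha,\Delta}\ge 1-m/(2\alpha)+\epsilon$ and using $\mu_c^k\le 1$, $\mu_o^k\ge 0$, the identity $(\alpha n+k)(1-m/(2\alpha))=|C|+k(1-m/(2\alpha))$ makes the $|C|$ terms cancel, leaving $(mn/2+k)\mu_a^k-(mn/2-k)\mu_b^k\ge k(1-m/(2\alpha))+(\alpha n+k)\epsilon$. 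The same telescoping then yields $\mu_{\max}(mn/2+X^2)\ge (X+1)\alpha n\epsilon$, and $X=\lceil m/(\alpha\epsilon)\rceil$ with $n$ large again forces $\mu_{\max}>1$.

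The one genuinely new ingredient---and the only step I expect to be delicate---is the $|C|$-cancellation in Case II: the parameters $|A|=|B|=mn/2$ and $|C|=(\alpha-m/2)n$ must be chosen precisely so that, after multiplying through by $|I_\alpha|=\alpha n+k$, the constant contribution coming from the mechanism's maximal choice $\mu_c^k=1$ is exactly absorbed by the target bound $1-m/(2\alpha)$, leaving only a term linear in $k$ that cannot prevent the $\Theta(n)$ right-hand side from dominating $\mu_{\max}\cdot\Theta(n)$ on the left. All other steps---the symmetrisation, the IC iterative transformation $\mu_b^k=\mu_a^{k+1}$, and the telescoping leading to the $\mu_{\max}>1$ contradiction---are essentially identical to Proposition~\ref{prp: large delta}.
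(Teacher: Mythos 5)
Your proposal is correct and follows essentially the same route as the paper: the paper's proof simply reuses the graphs of Proposition~\ref{prp: large delta} (clique $A\cup B$, plus the cycle $C$ in Case II) and bounds $\cl{C}'$ with the common probability $\mu$ on $A\cup B$, exactly as you do. The only difference is that you spell out the details the paper leaves implicit---the symmetrisation reduction for $\cl{C}'$ and the re-derivation of the telescoping contradiction with the denominator $|I_\alpha(G^k)|$---and these verifications are correct.
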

To see how this proposition implies the upper bound of Figure~\ref{fig: appx results 1}~(b), take $ \Delta=n-1 $ and notice that \[ m=\begin{cases}
1,&\alpha\leq\dfrac{1}{2}\\
2(1-\alpha),&\alpha\geq\dfrac{1}{2}.
\end{cases} \]
Hence, when $ \alpha<\dfrac{1}{2} $, $ 1-\dfrac{m}{2\alpha}=1-\dfrac{1}{2\alpha}<0  $ and $ Q_{\alpha,\Delta}'\leq 0 $. When $ \alpha\geq\dfrac{1}{2} $, $ 1-\dfrac{m}{2\alpha}=1-\dfrac{1-\alpha}{\alpha}=2-\dfrac{1}{\alpha}\geq 0 $, and $ Q'_{\alpha,\Delta}\leq 2-\dfrac{1}{\alpha} $.
\begin{proof}[Proof of Proposition~\ref{prp: apx1 high delta}]
	We analyse the same series of networks as in Proposition~\ref{prp: large delta}. We get for Case I ($ m\geq 2\alpha $):
	\begin{flalign*}
	\cl{C}'(M(G),I_\alpha(G))\leq \dfrac{\mu|A|-\mu|B|}{|A|}=0.
	\end{flalign*}
	For Case II ($ m<2\alpha $):
	\begin{flalign*}
	\cl{C}'(M(G),I_\alpha(G))\leq \dfrac{\mu|A|+|C|-\mu|B|}{|A|+|C|}=\dfrac{|C|}{|A|+|C|}=\dfrac{\alpha-m/2}{\alpha }=1-\dfrac{m}{2\alpha}.
	\end{flalign*}
	Combining the two cases, we get the claimed bound on $ Q_{\alpha,\Delta}'$.
\end{proof}
The next proposition proved the upper bound in Figure~\ref{fig: appx results 1}~(a).
\begin{prop}
	For all $ \alpha,\Delta $, $ Q_{\alpha,\Delta}'\leq \dfrac{1+3\alpha}{2+2\alpha}$.
\end{prop}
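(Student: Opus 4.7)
The plan is to reuse the adversarial construction from Proposition~\ref{prp: small delta}. I use the same pair of graphs $G$ and $G'$, which differ only in the weight of the single edge $(a_0, v)$ with $a_0 \in A$. Recall that $|I_\alpha(G)| = \alpha n$ (since $I_\alpha(G) = C \cup \{v\}$), while $|I_\alpha(G')| = n-1$ (since $I_\alpha(G') = N \setminus \{v\}$). For any IC, symmetric, probabilistic mechanism $M_p$, symmetry of $G$ lets me set $\mu := M_p(a, G)$ for all $a \in A$; IC applied to $a_0$ combined with symmetry of $G'$ (in which $a_0$ has the same outgoing profile as every $b \in B$) then forces $M_p(b, G') = \mu$ for every $b \in B$, exactly as in the original proof.

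To upper bound $\cl{C}'(M_p(G), I_\alpha(G))$, I plug in adversary-optimal values on the unconstrained probabilities, namely $M_p(v, G) = M_p(c, G) = 1$ and $M_p(b, G) = 0$, obtaining
\[
\cl{C}'(M_p(G), I_\alpha(G)) \leq \frac{|C|+1 - \mu|A|}{\alpha n} \xrightarrow[n\to\infty]{} 1 - \frac{(1-\alpha)\mu}{2\alpha}.
\]
For the bound in $G'$, the pinned value is $M_p(b, G') = \mu$, while $M_p(a, G') \leq 1$, $M_p(c, G') \leq 1$, and $M_p(v, G') \geq 0$ give the adversary room:
\[
\cl{C}'(M_p(G'), I_\alpha(G')) \leq \frac{|A| + \mu|B| + |C|}{n-1} \xrightarrow[n\to\infty]{} \alpha + \frac{(1-\alpha)(1+\mu)}{2}.
\]

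Since $Q'_{\alpha, \Delta}$ is bounded above by the minimum of these two expressions, the best the mechanism can do is to choose $\mu$ that equalises them. Solving the resulting linear equation gives $\mu = \alpha/(1+\alpha)$, and substituting back yields
\[
Q'_{\alpha, \Delta} \leq 1 - \frac{1-\alpha}{2(1+\alpha)} = \frac{1+3\alpha}{2+2\alpha}.
\]
No step is genuinely difficult --- the argument is a direct re-computation of Proposition~\ref{prp: small delta} for a different coincidence measure --- but the one point worth flagging is that the normalising denominators $|I_\alpha(G)| = \alpha n$ and $|I_\alpha(G')| = n-1$ differ, rather than both being $n$ as in the $\cl{C}$ measure. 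This breaks the symmetry in $\mu$ between the two bounds and shifts the optimal $\mu$ away from $1/2$, which is exactly what produces the asymmetric answer $(1+3\alpha)/(2+2\alpha)$ in place of the earlier $(1+\alpha)/2$.
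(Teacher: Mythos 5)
Your proposal is correct and follows essentially the same route as the paper: the same $G,G'$ construction from Proposition~\ref{prp: small delta}, the same pinning of $\mu$ via IC plus symmetry, and the same optimisation yielding $\mu=\alpha/(1+\alpha)$ and the bound $\tfrac{1+3\alpha}{2+2\alpha}$. The only (immaterial) difference is bookkeeping: you normalise by $|I_\alpha(G)|=\alpha n$ and $|I_\alpha(G')|=n-1$ and pass to the limit, while the paper uses $|C|+1$ and $|N|$ directly, which gives the same asymptotic expressions.
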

\begin{proof}
	We use the same network structure of Proposition~\ref{prp: small delta} and get:
	\begin{flalign*}
	\cl{C}'(M(G),I_\alpha(G))&\leq \dfrac{|C|+1-\mu|A|}{|C|+1}=1-\dfrac{\mu|A|}{\alpha n}.\\
	\cl{C}'(M(G'),I_\alpha(G'))&\leq \dfrac{|A|+\mu|B|+|C|+1}{|N|}=1-\dfrac{(1-\mu)|B|}{n}.
	\end{flalign*}
	Comparing the two expression to get the highest minimum of the two:
	\begin{flalign*}
	\cl{C}'(M(G),I_\alpha(G))=\cl{C}'(M(G'),I_\alpha(G'))\iff\\
	\dfrac{\mu|A|}{\alpha n}=\dfrac{(1-\mu)|B|}{n}\iff\\
	\dfrac{\mu}{\alpha}=1-\mu\iff \mu=\dfrac{\alpha}{1+\alpha}.
	\end{flalign*}
	Hence, 
	\begin{flalign*}
	Q_{\alpha,\Delta}'\leq 1-\dfrac{|A|}{(1+\alpha)n}=1-\dfrac{1-\alpha}{2(1+\alpha)}=\dfrac{1+3\alpha}{2+2\alpha}.
	\end{flalign*}
\end{proof}
The next proposition in the lower bound of Figure~\ref{fig: appx results 1}~(a).
\begin{prop}
	For any $ \alpha $ and any $ \Delta\leq\alpha n $, there is an IC probabilistic classification mechanism with quality $ \alpha-\dfrac{\Delta}{n} $.
\end{prop}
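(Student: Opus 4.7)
The plan is to re-use the ``absolutely worthy'' set $W(G)=\{x\in N: x\in I_\beta(G_x)\}$ from the proof of Proposition~\ref{prp: probabilistic mechanism}, with $\beta=\alpha-\Delta/n$ and $G_x$ the graph obtained from $G$ by re-weighting every out-edge of $x$ to $-1$, and to employ the deterministic mechanism $M(G)=W(G)$. Equivalently, I would modify the probabilistic scheme of Proposition~\ref{prp: probabilistic mechanism} by selecting each borderline agent with probability $0$ rather than $1/2$; this change turns out to be essential here, because under the new coincidence measure borderline misclassifications are no longer compensated by correct classifications of the absolutely unworthy agents (which now contribute nothing to the numerator).

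First, $M$ is IC since membership of $x$ in $W$ depends only on $G_x$, whose out-weights of $x$ are hard-coded to $-1$ and therefore independent of $x$'s reviews. The two observations from the discussion preceding Proposition~\ref{prp: probabilistic mechanism} then carry over verbatim and supply everything I need: (i)~$W(G)\subseteq I_\alpha(G)$, since if $r(x,G_x)<\beta n$ then restoring the true out-weights of $x$ can promote at most $\Delta$ out-neighbors past $x$, giving $r(x,G)\leq r(x,G_x)+\Delta<\alpha n$; and (ii)~$|W(G)|\geq \beta n$, since lowering $x$'s out-weights to $-1$ only lowers out-neighbors' scores, so $r(x,G_x)\leq r(x,G)$ for every $x$, whence $I_\beta(G)\subseteq W(G)$ and $|I_\beta(G)|\geq\beta n$.

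Combining (i) with $|I_\alpha(G)|\leq n$ and (ii) yields
$$\cl{C}'(M(G),I_\alpha(G))=\frac{|W(G)\cap I_\alpha(G)|-|W(G)\cap\overline{I_\alpha}(G)|}{|I_\alpha(G)|}=\frac{|W(G)|}{|I_\alpha(G)|}\geq\frac{|W(G)|}{n}\geq\beta=\alpha-\frac{\Delta}{n}$$
for every $G\in\cl{G}(n,\Delta)$, which gives $Q'_{\alpha,\Delta}(M)\geq\alpha-\Delta/n$ after taking the limit and minimum in the definition of $Q'_{\alpha,\Delta}$. I do not expect a substantive obstacle: all of the structural work has already been done in Section~\ref{sec: mechanism}. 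The single point that warrants emphasis is the choice $p=0$ on the borderline set: under $\cl{C}'$ the denominator is $|I_\alpha(G)|\leq n$ rather than $n$, so each borderline agent that lands in $\overline{I_\alpha}(G)$ costs proportionally more than under $\cl{C}$, and the $U$ agents no longer offset anything, so any positive probability on $B$ risks undoing the safe positive contribution $|W|\geq\beta n$ when $\alpha$ is small; choosing $p=0$ sidesteps this entirely.
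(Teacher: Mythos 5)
Your proof is correct and is essentially the paper's own argument for this proposition: you use the same deterministic mechanism $M(G)=W(G)$ with $\beta=\alpha-\Delta/n$, the same facts $W(G)\subseteq I_\alpha(G)$ and $|W(G)|\geq\beta n$, and the same bound $\cl{C}'(M(G),I_\alpha(G))=|W(G)|/|I_\alpha(G)|\geq\beta$. The only difference is that you spell out why $|W(G)|\geq\beta n$ (via $r(x,G_x)\leq r(x,G)$, hence $I_\beta(G)\subseteq W(G)$) and why setting probability $0$ on the borderline set is the right choice under $\cl{C}'$, both of which the paper leaves implicit.
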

\begin{proof}
	Let $ \beta=\alpha-\Delta/n $ and let $ W(G) $ be the set of absolutely worthy agents as defined in the proof of Proposition~\ref{prp: probabilistic mechanism}. We define mechanism $ M $:
	\begin{flalign*}
	\forall G\in\cl{G}(n,\Delta)\; M(G)=W(G).
	\end{flalign*}
	Then,
	\begin{flalign*}
	\cl{C}'(M(G),I_\alpha(G))=\dfrac{|W(G)|}{|I_\alpha(G)|}\geq \dfrac{\beta n}{n}=\beta.
	\end{flalign*}
\end{proof}

\subsection{Bounds on $ Q_{\alpha,\Delta}'' $}
As a trivial mechanism we take $ M_N(G)\equiv N $. Clearly, $ \cl{C}''(M_N(G),I_\alpha(G))=\dfrac{1}{2} $ for all networks. 
\begin{prop}
	For $ \Delta=n-1 $ and any $ \alpha $, $ Q_{\alpha,\Delta}''\leq \dfrac{1}{2} $.
\end{prop}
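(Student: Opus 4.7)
The plan is to adapt the chain-of-symmetries technique from Proposition~\ref{prp: large delta} to the measure $\cl{C}''$, using the freedom afforded by $\Delta = n-1$ to build a sequence of graphs on which symmetry and IC force $\cl{C}''$ toward $\tfrac{1}{2}$. By the same symmetrization argument as in the Claim in Section~\ref{sec: impossibilities} (which extends to $\cl{C}''$ verbatim, since both summands depend only on graph-automorphism invariants), it suffices to upper bound $\cl{C}''$ for IC, symmetric, probabilistic mechanisms $M_p$.

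For each integer $k$ with $0 \leq k \leq (1-\alpha)n - 1$, construct $G^k$ as the graph on $N$ in which every ordered pair of distinct nodes is an edge, with $N$ partitioned into $A^k$ of size $\alpha n + k$ (all out-edges of weight $0$) and $B^k$ of size $(1-\alpha)n - k$ (all out-edges of weight $1$). A direct score computation gives $s(a) = |B^k|/(n-1) > (|B^k|-1)/(n-1) = s(b)$ for $a \in A^k$, $b \in B^k$, so $I_\alpha(G^k) = A^k$ and $\overline{I_\alpha}(G^k) = B^k$. Write $\mu_a^k := M_p(a, G^k)$ and $\mu_b^k := M_p(b, G^k)$ (well-defined by symmetry of $M_p$).

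The central observation is that $G^{k+1}$ is obtained from $G^k$ by re-setting the out-edges of one distinguished $b \in B^k$ from $1$ to $0$; in $G^{k+1}$ this $b$ is isomorphic to the nodes of $A^k$ and belongs to $A^{k+1}$. By IC, $M_p(b, G^{k+1}) = M_p(b, G^k) = \mu_b^k$, and by symmetry in $G^{k+1}$, $M_p(b, G^{k+1}) = \mu_a^{k+1}$, so $\mu_b^k = \mu_a^{k+1}$. Since $|A^k|$ and $|B^k|$ are deterministic, the probabilistic coincidence factors cleanly and
\[
\cl{C}''(M_p(G^k), I_\alpha(G^k)) \;=\; \frac{\sum_{a \in A^k}\mu_a^k}{2|A^k|} + \frac{\sum_{b \in B^k}(1-\mu_b^k)}{2|B^k|} \;=\; \frac{1 + \mu_a^k - \mu_a^{k+1}}{2}.
\]

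To finish, suppose for contradiction that $Q''_{\alpha,\Delta}(M_p) \geq \tfrac{1}{2} + \epsilon$ for some $\epsilon > 0$; then for $n$ sufficiently large, $\mu_a^k - \mu_a^{k+1} \geq 2\epsilon$ for every admissible $k$. Telescoping over $k = 0, 1, \ldots, K$ yields $\mu_a^0 - \mu_a^{K+1} \geq 2(K+1)\epsilon$, which, once $K$ is chosen with $2(K+1)\epsilon > 1$ and $K \leq (1-\alpha)n - 2$ (both simultaneously possible for $n$ large, since $K$ is a fixed constant and $(1-\alpha)n \to \infty$), contradicts $\mu_a^{K+1} \in [0,1]$. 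The one delicate step is the deduction $\mu_b^k = \mu_a^{k+1}$: IC only constrains the probability of the node whose own out-edges are modified, but here that node is precisely the distinguished $b$, and symmetry in $G^{k+1}$ then propagates the equality to all of $A^{k+1}$.
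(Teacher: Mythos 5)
Your proposal is correct and takes essentially the same route as the paper: the same two-block complete network ($A$ with out-weights $0$, $B$ with out-weights $1$, so $I_\alpha=A$), with incentive compatibility plus within-graph symmetry along the chain $G^0,G^1,\dots$ forcing the selection probabilities of the two blocks to (nearly) coincide, which collapses $\cl{C}''$ to $\tfrac{1}{2}$. The only difference is that you spell out the telescoping contradiction which the paper delegates to ``a similar argument to that in the proof of Proposition~\ref{prp: large delta}''; in fact, the normalization by $|A^k|$ and $|B^k|$ makes your telescoping cleaner than the summation argument used there.
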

\begin{proof}
	Partition $ N $ into two sets $ A, B $ of sizes $ \alpha n $ and $ (1-\alpha)n $, respectively. Let $ G $ be the complete network in which the weights on the out-edges of nodes in $ A $ are 0, and the weights on the out-edge of nodes in $ B $ are 1. Clearly, the nodes in $ A $ are symmetric and the nodes in $ B $ are symmetric. The nodes in $ A $ have a better score than the nodes in $ B $, hence $ I_\alpha(G)=A$. Let $ M $ be a symmetric probabilistic mechanism. By similar argument to that in the proof of Proposition~\ref{prp: large delta}, we may assume that all the nodes in the network are selected by $ M $ with the same probability, $ \mu $. We get,
	\begin{flalign*}
	\cl{C}'(M(G),I_\alpha(G))=\dfrac{\mu|A|}{2|A|}+\dfrac{(1-\mu)|B|}{2|B|}=\dfrac{1}{2}.
	\end{flalign*}
\end{proof}
This settles Figure~\ref{fig: appx results 2}~(b). We turn to the upper bound of Figure~\ref{fig: appx results 2}~(a).

\begin{prop}
	For all $ \alpha,\Delta $, $ Q_{\alpha,\Delta}'\leq \dfrac{3}{4}+\dfrac{1}{4(2-\alpha)}$.
\end{prop}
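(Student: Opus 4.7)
The plan is to derive the stated bound as an immediate weakening of the earlier upper bound, which already established $Q_{\alpha,\Delta}' \leq \frac{1+3\alpha}{2+2\alpha}$ via the two-graph construction of Proposition~\ref{prp: small delta}. Since that bound holds for all $\alpha$ and $\Delta$, it suffices to verify the elementary numerical inequality $\frac{1+3\alpha}{2+2\alpha} \leq \frac{3}{4} + \frac{1}{4(2-\alpha)}$ on $\alpha \in (0,1]$; the claim then follows by transitivity, without introducing any new network construction or mechanism analysis.

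To verify this inequality, first rewrite the right-hand side with a common denominator as $\frac{3(2-\alpha)+1}{4(2-\alpha)} = \frac{7-3\alpha}{4(2-\alpha)}$. Since both $2+2\alpha$ and $4(2-\alpha)$ are strictly positive for $\alpha \in [0,1]$, I can cross-multiply to obtain the equivalent statement $2(2-\alpha)(1+3\alpha) \leq (1+\alpha)(7-3\alpha)$. Expanding gives $4 + 10\alpha - 6\alpha^2$ on the left and $7 + 4\alpha - 3\alpha^2$ on the right, so the difference is $3 - 6\alpha + 3\alpha^2 = 3(1-\alpha)^2 \geq 0$, with equality only at $\alpha = 1$. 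This finishes the argument. The only subtle point is confirming that the claimed bound is indeed implied by the earlier bound, i.e., that this is a weakening rather than a strengthening; the algebra above makes this explicit.

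If a self-contained direct proof is desired instead, the natural plan would be to rerun the two-graph construction $G, G'$ from Proposition~\ref{prp: small delta} under the $\mathcal{C}'$ coincidence measure and derive the bound from the two inequalities $\mathcal{C}'(M(G)) \leq 1 - \mu(1-\alpha)/(2\alpha)$ and $\mathcal{C}'(M(G')) \leq 1 - (1-\mu)(1-\alpha)/2$ already computed in the proof of the earlier proposition, where $\mu$ denotes the common probability a symmetric IC mechanism assigns to nodes in $A \cup B$. The main obstacle in this route is that the optimal balancing choice of $\mu$ produces the strictly tighter value $\frac{1+3\alpha}{2+2\alpha}$, so a direct proof of the present, weaker bound would require deliberately selecting a suboptimal $\mu$ engineered to land exactly at $\frac{7-3\alpha}{4(2-\alpha)}$, which is an artificial step; the corollary route above sidesteps this and is the cleaner path.
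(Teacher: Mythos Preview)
Your corollary argument is algebraically sound for the statement \emph{as literally written}, i.e.\ for $Q_{\alpha,\Delta}'$: the inequality $\frac{1+3\alpha}{2+2\alpha}\le\frac{3}{4}+\frac{1}{4(2-\alpha)}$ does reduce to $3(1-\alpha)^2\ge 0$, so the claimed bound would follow immediately from the earlier proposition on $Q'$.

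However, the statement is almost certainly a typo in the paper. The proposition appears in the subsection ``Bounds on $Q_{\alpha,\Delta}''$'', it is introduced as ``the upper bound of Figure~\ref{fig: appx results 2}~(a)'' (the $Q''$ figure), and---most tellingly---the paper's own proof, despite writing $\cl{C}'$, is actually evaluating the $\cl{C}''$ measure: the expressions $\frac{|C|+1}{2(|C|+1)}+\frac{(1-\mu)|A|+|B|}{2(|A|+|B|)}$ and $\frac{|A|+\mu|B|+|C|+1}{2|N|}+\frac{1}{2}$ are instances of $\frac{|M\cap I_\alpha|}{2|I_\alpha|}+\frac{|\overline{M}\cap\overline{I_\alpha}|}{2|\overline{I_\alpha}|}$, not of $\cl{C}'$. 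The paper obtains the two bounds $1-\mu/4$ and $1-(1-\mu)(1-\alpha)/4$, balances them at $\mu=(1-\alpha)/(2-\alpha)$, and gets $\frac{3}{4}+\frac{1}{4(2-\alpha)}$. So the intended claim is $Q_{\alpha,\Delta}''\le\frac{3}{4}+\frac{1}{4(2-\alpha)}$.

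For that intended claim your route does not work: $\cl{C}'$ and $\cl{C}''$ are different functionals, and a bound on $Q'$ says nothing about $Q''$. Your suggested ``self-contained'' alternative likewise plugs in the $\cl{C}'$ bounds $1-\mu(1-\alpha)/(2\alpha)$ and $1-(1-\mu)(1-\alpha)/2$, which are not the ones at play here. To match the paper you would rerun the $G,G'$ construction of Proposition~\ref{prp: small delta} under $\cl{C}''$, arriving at $1-\mu/4$ and $1-(1-\mu)(1-\alpha)/4$ as above; note in particular that the optimal $\mu$ here is $(1-\alpha)/(2-\alpha)$, not $\alpha/(1+\alpha)$ as in the $Q'$ case, and the resulting bound $\frac{3}{4}+\frac{1}{4(2-\alpha)}$ is tight for this argument rather than a deliberate weakening.
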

\begin{proof}
	We use the same network structure of Proposition~\ref{prp: small delta} and get:
	\begin{flalign*}
	\cl{C}'(M(G),I_\alpha(G))&\leq \dfrac{|C|+1}{2(|C|+1)}+\dfrac{(1-\mu)|A|+|B|}{2(|A|+|B|)}=1-\dfrac{\mu|A|}{2(|A|+|B|)}=1-\dfrac{\mu}{4}.\\
	\cl{C}'(M(G'),I_\alpha(G'))&\leq \dfrac{|A|+\mu|B|+|C|+1}{2|N|}+\dfrac{1}{2}=1-\dfrac{(1-\mu)|B|}{2n}=1-\dfrac{(1-\mu)(1-\alpha)}{4}.
	\end{flalign*}
	Comparing the two expression to get the highest minimum of the two:
	\begin{flalign*}
	\cl{C}'(M(G),I_\alpha(G))=\cl{C}'(M(G'),I_\alpha(G'))\iff\\
	\mu=(1-\mu)(1-\alpha)\iff \mu=\dfrac{1-\alpha}{2-\alpha}.
	\end{flalign*}
	Hence, 
	\begin{flalign*}
	Q_{\alpha,\Delta}'\leq 1-\dfrac{1-\alpha}{4(2-\alpha)}=\dfrac{3}{4}+\dfrac{1}{4(2-\alpha)}.
	\end{flalign*}
\end{proof}
%\begin{prop}
%	Let $ m=\min\left \{2(1-\alpha), \dfrac{\Delta}{n} \right \} $. Then if
%	\begin{enumerate}
%		\item $ \alpha\leq \dfrac{1}{2} $, $ Q_{\alpha,\Delta}''\leq\min\left\{1-\dfrac{m}{4(1-\alpha)}, 1-\dfrac{m}{2(2-m)}  \right\} $.
%		\item $ \alpha>\dfrac{1}{2} $, $ Q_{\alpha,\Delta}''\leq 1-\dfrac{m}{4\alpha} $.
%	\end{enumerate} 
%\end{prop}
%\begin{proof}
%	We follow the proof of Proposition~\ref{prp: large delta}.	
%\end{proof}

The last proposition is the mechanism which gives the lower bound of Figure~\ref{fig: appx results 2}~(a).
\begin{prop}
	For any $ \alpha $ and any $ \Delta\leq\alpha n $, there is an IC probabilistic classification mechanism with quality $ \dfrac{1}{2}\left(1+\alpha-\dfrac{\Delta}{n} \right) $.
\end{prop}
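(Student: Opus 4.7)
The plan is to use precisely the deterministic mechanism $M(G) := W(G)$, where $W$ is the absolutely-worthy set constructed in the proof of Proposition~\ref{prp: probabilistic mechanism} — the same mechanism that already handled the $Q_{\alpha,\Delta}'$ lower bound in the previous proposition. Since membership in $W(G)$ is determined by ranking $x$ in the graph $G_x$ (where $x$'s outgoing weights are frozen to $-1$), $x$ cannot alter his own classification by changing his reviews, and so $M$ is IC. This is a (degenerate) probabilistic mechanism and thus admissible in the statement.

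The quality computation then follows directly from the two facts already established in the proof of Proposition~\ref{prp: probabilistic mechanism}: $W(G)\subseteq I_\alpha(G)$ and $|W(G)|\geq\beta n$, where $\beta=\alpha-\Delta/n$. The containment $W\subseteq I_\alpha$ forces $M(G)\cap I_\alpha(G)=W(G)$ and $\overline{M}(G)\cap\overline{I_\alpha}(G)=\overline{I_\alpha}(G)$, so in the nondegenerate case $\overline{I_\alpha}(G)\neq\emptyset$ one obtains
\[
\cl{C}''(M(G),I_\alpha(G))=\frac{|W(G)|}{2|I_\alpha(G)|}+\frac{|\overline{I_\alpha}(G)|}{2|\overline{I_\alpha}(G)|}=\frac{1}{2}+\frac{|W(G)|}{2|I_\alpha(G)|}\geq \frac{1}{2}+\frac{\beta n}{2n}=\frac{1}{2}\left(1+\alpha-\frac{\Delta}{n}\right),
\]
where the last inequality uses $|I_\alpha(G)|\leq n$. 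Taking the worst case over $G$ and then $n\to\infty$ yields the claimed quality.

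The one extra bookkeeping step is the degenerate case $\overline{I_\alpha}(G)=\emptyset$, which is treated by the alternative branch of the $\cl{C}''$ definition; here $\cl{C}''(M(G),I_\alpha(G))=|W(G)|/(2n)+1/2$, which is bounded below by the same $\tfrac{1}{2}(1+\beta)$, so no separate argument is really needed.

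There is essentially no technical obstacle here — the only conceptual point is recognising that the right mechanism for $\cl{C}''$ is the deterministic $W$-mechanism rather than the $B$-randomising variant of Proposition~\ref{prp: probabilistic mechanism}. The reason is that $\cl{C}''$ normalises the worthy and unworthy buckets separately, so classifying every non-$W$ agent as unworthy already collects the full $\tfrac12$ on the unworthy side (because $W\cap\overline{I_\alpha}=\emptyset$), and the only remaining term $|W|/(2|I_\alpha|)$ is immediately controlled by the bound $|W|\geq\beta n\geq\beta|I_\alpha|$ already proved earlier.
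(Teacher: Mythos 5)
Your proposal is correct and is essentially the paper's own proof: the same mechanism $M(G)=W(G)$, the same split of $\cl{C}''$ into the worthy term $|W|/(2|I_\alpha|)\geq\beta/2$ (via $|W|\geq\beta n$ from Proposition~\ref{prp: probabilistic mechanism}) and the full $\tfrac12$ on the unworthy side from $W\subseteq I_\alpha(G)$. Your explicit treatment of the degenerate case $\overline{I_\alpha}(G)=\emptyset$ is a small bookkeeping addition the paper leaves implicit, but it changes nothing.
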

\begin{proof}
	Let $ \beta=\alpha-\Delta/n $ and let $ W(G) $ be the set of absolutely worthy agents as defined in the proof of Proposition~\ref{prp: probabilistic mechanism}. We define mechanism $ M $:
	\begin{flalign*}
	\forall G\in\cl{G}(n,\Delta)\; M(G)=W(G).
	\end{flalign*}
	Then,
	\begin{flalign*}
	\cl{C}'(M(G),I_\alpha(G))\geq\dfrac{|W(G)|}{2|I_\alpha(G)|}+\dfrac{|B(G)\cap \overline{I_\alpha}(G)|+|U(G)||}{2|\overline{I_\alpha}(G)|}
	\geq \dfrac{\beta n}{2n}+\dfrac{1}{2}=\dfrac{1}{2}(1+\beta).
	\end{flalign*}
\end{proof}

\end{document}